\documentclass[11pt]{article}
\usepackage[a4paper,left=20mm,top=20mm,right=20mm,bottom=25mm]{geometry}

\usepackage{authblk}
\usepackage{natbib}

\usepackage{textcomp}
\usepackage{textgreek}
\usepackage[colorlinks=true,urlcolor=blue,citecolor=blue,linkcolor=blue]{hyperref}
\usepackage{graphicx}
\usepackage{color}
\usepackage{float}
\usepackage{bm}
\usepackage[noend]{algorithmic}
\usepackage{algorithm}
\usepackage{blkarray}
\usepackage{mathtools}
\usepackage{amsmath}
\usepackage{amssymb}
\usepackage{amsthm}
\usepackage{amsfonts}
\usepackage{mathrsfs}
\usepackage{latexsym}
\usepackage{mathtools}
\usepackage{apptools}
\usepackage{array}
\newcolumntype{C}{>{$}c<{$}}
\usepackage[switch]{lineno} 

\makeatletter
\def\moverlay{\mathpalette\mov@rlay}
\def\mov@rlay#1#2{\leavevmode\vtop{%
   \baselineskip\z@skip \lineskiplimit-\maxdimen
   \ialign{\hfil$\m@th#1##$\hfil\cr#2\crcr}}}
\newcommand{\charfusion}[3][\mathord]{
    #1{\ifx#1\mathop\vphantom{#2}\fi
        \mathpalette\mov@rlay{#2\cr#3}
      }
    \ifx#1\mathop\expandafter\displaylimits\fi}
\DeclareRobustCommand\bigop[1]{%
  \mathop{\vphantom{\sum}\mathpalette\bigop@{#1}}\slimits@
}
\newcommand{\bigop@}[2]{%
  \vcenter{%
    \sbox\z@{$#1\sum$}%
    \hbox{\resizebox{\ifx#1\displaystyle.9\fi\dimexpr\ht\z@+\dp\z@}{!}{$\m@th#2$}}%
  }%
}
\makeatother
\DeclareMathOperator{\lca}{lca}

\DeclareMathOperator*{\argmin}{arg\,min}
\newcommand{\cupdot}{\charfusion[\mathbin]{\cup}{\cdot}}

\setcounter{dbltopnumber}{2}
\setcounter{totalnumber}{5}

\newcommand{\unrooted}[1]{\overline{#1}}
\newtheorem{theorem}{Theorem}
\newtheorem{lemma}[theorem]{Lemma}
\newtheorem{corollary}[theorem]{Corollary}
\newtheorem{definition}[theorem]{Definition}
\newtheorem{fact}[theorem]{Observation}
\newcommand{\bmr}{\mathrel{\bm{\rightarrow}}}
\newcommand{\C}[1]{\mathscr{C}_{#1}}
\newcommand{\AX}[1]{\textnormal{#1}}

\newcommand{\child}{\mathsf{child}}
\newcommand{\parent}{\mathsf{par}}

\providecommand{\keywords}[1]{\textbf{\textit{Keywords: }} #1}

\begin{document}

\title{From Best Hits to Best Matches}

\author[1,6-12]{Peter F.\ Stadler}
\author[1]{Manuela Gei{\ss}}
\author[1]{David Schaller}
\author[2]{Alitzel L{\'o}pez S{\'a}nchez}
\author[2]{Marcos E.\ Gonz{\'a}lez}
\author[3]{Dulce I.\ Valdivia}
\author[4,5]{Marc Hellmuth}
\author[2]{Maribel Hern{\'a}ndez Rosales}

\affil[1]{Bioinformatics Group, Department of Computer Science; and 
	Interdisciplinary Center of Bioinformatics, University of Leipzig,
	H{\"a}rtelstra{\ss}e 16-18, D-04107 Leipzig, Germany}
\affil[2]{CONACYT-Instituto de Matem{\'a}ticas, UNAM Juriquilla,
	Blvd.\ Juriquilla 3001,
	76230 Juriquilla, Quer{\'e}taro, QRO, M{\'e}xico}
\affil[3]{Centro de Investigaci{\'o}n y de Estudios Avanzados del IPN, 
	CINVESTAV-Irapuato,
	Km. 9.6 Libramiento Norte Carretera Irapuato-Le{\'o}n, 
	36821 Irapuato, GTO, M{\'e}xico}
\affil[4]{Institute of Mathematics and Computer Science, 
	University of Greifswald, Walther-Rathenau-Stra{\ss}e 47, 
	D-17487 Greifswald, Germany}
\affil[5]{Center for Bioinformatics, Saarland University, Building E 2.1, 
	P.O.\ Box 151150, D-66041 Saarbr{\"u}cken, Germany}
\affil[6]{German Centre for Integrative Biodiversity Research (iDiv)
	Halle-Jena-Leipzig} 
\affil[7]{Competence Center for Scalable Data Services
	and Solutions} 
\affil[8]{Leipzig Research Center for Civilization Diseases,
	Leipzig University,
	H{\"a}rtelstra{\ss}e 16-18, D-04107 Leipzig}
\affil[9]{Max-Planck-Institute for Mathematics in the Sciences,
	Inselstra{\ss}e 22, D-04103 Leipzig}
\affil[10]{Inst.\ f.\ Theoretical Chemistry, University of Vienna,
	W{\"a}hringerstra{\ss}e 17, A-1090 Wien, Austria}
\affil[11]{Facultad de Ciencias, Universidad National de Colombia, Sede
	Bogot{\'a},
	Ciudad Universitaria, 
	111321,	Bogot{\'a}, D.C., Colombia}
\affil[12]{Santa Fe Institute, 1399 Hyde Park Rd., Santa Fe,
	NM 87501, USA}

\date{}
\normalsize

\maketitle

    \begin{abstract} 
      \noindent\textbf{Background: }           
      Many of the commonly used methods for orthology detection start from
      mutually most similar pairs of genes (reciprocal best hits) as an
      approximation for evolutionary most closely related pairs of genes
      (reciprocal best matches). This approximation of best matches by best
      hits becomes exact for ultrametric dissimilarities, i.e., under the
      Molecular Clock Hypothesis. It fails, however, whenever there are
      large lineage specific rate variations among paralogous genes.  In
      practice, this introduces a high level of noise into the input data
      for best-hit-based orthology detection methods.
      
      \smallskip
      \noindent\textbf{Results: }
      If additive distances between genes are known, then evolutionary most
      closely related pairs can be identified by considering certain
      quartets of genes provided that in each quartet the outgroup relative
      to the remaining three genes is known. \emph{A priori} knowledge of
      underlying species phylogeny greatly facilitates the identification
      of the required outgroup. Although the workflow remains a heuristic
      since the correct outgroup cannot be determined reliably in all
      cases, simulations with lineage specific biases and rate asymmetries
      show that nearly perfect results can be achieved. In a realistic
      setting, where distances data have to be estimated from sequence data
      and hence are noisy, it is still possible to obtain highly accurate
      sets of best matches.
      
      \smallskip
      \noindent\textbf{Conclusion: } 
      Improvements of tree-free orthology assessment methods can be
        expected from a combination of the accurate inference of best
        matches reported here and recent mathematical advances in the
        understanding of (reciprocal) best match graphs and orthology
        relations.
        
        \smallskip\noindent\keywords{
       	Best Matches, Gene Tree, Species Tree, Reconciliation, Orthology}
    \end{abstract}

\section{Background}

\sloppy

The distinction of orthologous and paralogous pairs of genes, respectively,
is of key importance in evolutionary biology as well as genome annotation.
As defined by Walter Fitch \cite{Fitch:70,Fitch:00}, two genes are
orthologs if their last common ancestor (in the gene tree) corresponds to a
speciation event, and they are paralogs if they arose through a duplication
event. In general, orthologs are expected to have the same function in
different organisms, while the functions of paralogs are usually similar
but clearly distinct \cite{Koonin:05,Gabaldon:13}.

A large class of computational approaches to orthology assessment
\cite{Altenhoff:09,Altenhoff:16} uses symmetric best matches (SBM)
\cite{Tatusov:97}, also known as bidirectional best hits (BBH)
\cite{Overbeek:99}, reciprocal best hits (RBH) \cite{Bork:98}, or
reciprocal smallest distance (RSD) \cite{Wall:03}. The intuitive
justification for these approaches is that symmetric best matches (in the
sense of sequence similarity) approximate the idea of evolutionarily
closest relatedness. These two concepts are not the same, however. The
notion of evolutionary relatedness depends on the underlying phylogenetic
tree $T$ and is naturally expressed by comparing last common ancestors: a gene $x$ is more closely related to a gene $y$ than to 
$y'$ if the last common ancestor $\lca(x,y)$ is a 
successor of $\lca(x,y')$ in $T$.

From an evolutionary point of view, therefore, one is interested in
\emph{reciprocal best matches} (defined in terms of the gene tree $T$)
rather than in \emph{reciprocal best hits} (defined in terms of some
distance of similarity measure between sequences). Best matches and best
hits are equivalent if and only if the Molecular Clock Hypothesis is
satisfied \cite{Zuckerkandl:62,Kumar:05}. In general this is not the
case. In particular, paralogous members of a gene family often differ in
their evolutionary rates due to (adaptive) changes in the function
\cite{Kawahara:07,Soria:14}. Both the
``Duplication-Degeneration-Complementation'' (DDC) model \cite{Force:99}
and the ``Escape from Adaptive Conflict'' (EAC) model \cite{Hittinger:07}
predict that the fate of paralogs, including their evolutionary rate, may
differ substantially between lineages that diverge soon after the
duplication event due to different selective pressures. The simplest case
is shown in Fig.\ \ref{fig:ratevar}: an ancestral gene is duplicated before
the speciation event leading to two species (indicated by colors), each
containing two paralogs (denoted by $x$ and $x'$ in the red species and $y$
and $y'$ in the blue species). The two paralogs evolve with very different
rates in the two species. Although $x$ and y as well $x'$ and $y'$ are
orthologs, the evoltionary rates are more similar between $x$ and $y'$, and
$x'$ and $y$, respectively.  This situation is not at all uncommon. The
asymmetric divergence of the genes in the HOXA cluster following the
teleost-specific (3R) genome duplication may serve as a paradigmatic
example \cite{Wagner:05a}. While in fugu (\textit{Takifugu rubripes}) and
other percomorphs the HOXAb paralogs diverge faster, it is the HOXA13b
paralog that evolves at a faster rate in zebrafish (\textit{Danio rerio}),
which diverged early from percomorphs within the Teleostei clade.

\begin{figure}[t]
	\begin{tabular}{lcr}
		\begin{minipage}{0.5\textwidth}
			\begin{center}
				\includegraphics[width=\textwidth]{./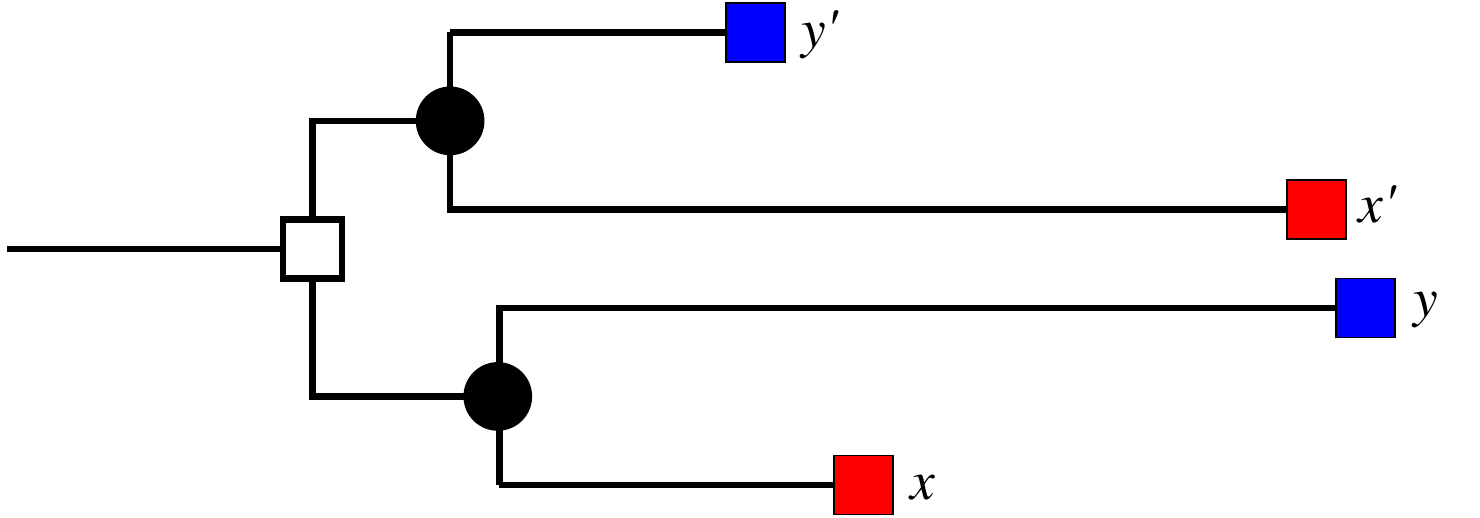}
			\end{center}
		\end{minipage}
		&    & 
		\begin{minipage}{0.4\textwidth}
			\caption{Lineage-specific rate variation between paralogs. The gene tree,
				with branch length indicating an additive evolutionary distances,
				pre-dates the speciation ($\bullet$) of the red and blue species. We
				have $\lca(x,y)\prec\lca(x,y')$ but $d(x,y')<d(x,y)$.}
			\label{fig:ratevar}
		\end{minipage}
	\end{tabular}
\end{figure}

In Fig.~\ref{fig:ratevar} the pair $x,y'$ shows the largest sequence
similarity and hence will appear as reciprocal best hit, while the closest
evolutionary relative of $x$ is the gene $y$. This discrepancy is
\textbf{not} a consequence of inaccurate measurements but an intrinsic
feature of the evolutionary process: more evolutionary events have
accumulated on the path from $x$ to $y$ than on the path from $x$ to
$y'$. The correct \emph{reciprocal best hit} therefore does not coincide
with the correct \emph{reciprocal best match}. This immediately begs the
question whether such cases can be detected from sequence comparisons.  We
consider this issue at two levels: (i) Can (reciprocal) best matches be
identified \emph{in principle}, i.e., from perfectly accurate data, and
(ii) how well can this be done in practice? To address the first question
we will assume that we can determine an additive distance between any two
genes and investigate the consequences of this assumption. To investigate
the accuracy that can be achieved from sequence data we will devise a
simulation system to generate evolutionary scenarios with complex rate
variations.

The focus on additive metrics is motivated by the close connection between
additive metrics and evolutionary trees. More precisely, an additive metric
determines a unique \emph{unrooted} phylogenetic tree $\unrooted{T}$ as
well as its branch length \cite{SimoesPereira:69,Buneman:74}, and
\emph{vice versa}. The determination of best matches, which are defined in
terms of last common ancestors, however, requires a rooted phylogenetic
tree $T$. From a theoretical point of view, therefore, the missing
information is the placement of the root of $T$ in the underlying unrooted
phylogenetic tree $\unrooted{T}$.

The problem of determining the position of the root in an unrooted tree
$\unrooted{T}$ has been well studied in the phylogenetic literature
\cite{Kinene:16}. The most common approach is the inclusion of an outgroup,
i.e., a taxon $z$ known to branch earlier than the taxa of interest. The
root is then located in the branch leading to $z$. Outgroup rooting can be
unreliable in the presence of rapid radiations or when only are very
distant outgroups are available \cite{Holland:03,Shavit:07}. The simplest
method is midpoint rooting \cite{Swofford:96}, which places the root at the
midpoint on the longest path in the tree. Despite its simplicity it often
works remarkably well \cite{Hess:07}. An interesting variation on this
theme is minimum variance rooting \cite{Mai:17}. The estimation of dated
phylogenies using a relaxed clock assumption yields an estimate for the
position of the root as a by-product \cite{Drummond:06}. A related Bayesian
method was introduced in \cite{Huelsenbeck:02}. In a phylogenomics setting,
the root of the species tree can also be obtained by minimizing the number
of inferred gene duplications \cite{Katz:12}. Most recently, non-reversible
substitution models have been employed for estimating rooted phylogenic
trees \cite{Williams:15,Cherlin:18}.

From a practical point of view, furthermore, we wish to avoid the explicit
construction of a (rooted or unrooted) gene tree $T$ since reconstructing
accurate evolutionary trees from individual gene sequences is a notoriously
difficult problem.  Instead we aim to stay as close as possible to the idea
of reciprocal best hit methods and thus we will attempt to use only
``local'' comparisons of as few as possible measurements of evolutionary
distances. This idea naturally leads us to considering quartets, i.e.,
unrooted trees describing four taxa, and the corresponding rooted
triples. It is well known that the rooted triples are sufficient to
determine the rooted tree in which they reside. Moreover, there is
polynomial-time algorithm that either constructs a rooted tree $T$ from a
set of rooted triples or determines that no such tree exists \cite{Aho:81}.
By Buneman's Theorem \cite{SimoesPereira:69,Buneman:74}, an unrooted tree
can be uniquely recovered from all its quartets. However, the problem of
determining whether a given set is compatible (i.e., whether there is an
unrooted tree $\unrooted{T}$ that contains all quartets) is NP-complete
\cite{Steel:92}, a fact that reinforces the desire to avoid the explicit
reconstruction of $\unrooted{T}$. Nevertheless, these classical results
ensure that the relevant information is contained in quartets. More
directly, we will show in this contribution that if we can reliably
determine a suitable outgroup, best matches can be extracted from a small
set of quartets.

Although much of this work is based on the assumption that an additive
distance between taxa is available, one has to keep in mind that additive
evolutionary distances, like divergence times, cannot be measured directly.
While it is common practice to determine a dissimilarity $d'(x,y)$ of two
taxa (genes) $x$ and $y$ from pairwise alignments, $d'$ is a systematic
under-estimate of the number of events $d$ due to back-mutations, and thus
not additive. In practice, the conversion of measurements of $d'$ into an
additive distance $d$ that quantifies the number of evolutionary events is
based on a Markov model of the evolutionary process. For sequence data,
this may be the Jukes-Cantor model \cite{Jukes:69} or one of its more
elaborate variants \cite{Kimura:80,Hasegawa:85,Tamura:92}. In the most
benign setting, $d$ and $d'$ are related by a monotone transformation,
which in particular implies that the measured distances $d'$ correctly
identify the best hits. It can also be shown, however, that non-additive
distances in general cannot identify the correct topology of quartets
\cite{Retzlaff:18b}.  Hence, we have no hope of computing correct best
matches from the directly measures of non-additive (dis)similarities.

This contribution is organized as follows: in the following section we give
a rigorous mathematical rendering of the background outlined above and use
it to show that, given an additive distance measure, it is indeed possible
to perfectly identify all best matches of a gene $x$ of species $s$ among
its homologs $\{y_1,\dots,y_k\}$ in species $t$ provided a suitable
outgroup $z$ can be found for every set $\{x,y_i,y_j,z\}$ of four genes. As
a consequence, the practical problem becomes the reliable identification of
correct ``relative outgroups''.  Assuming knowledge on the phylogeny of
species from which the genes are taken, we proceed to derive several
conditions under which $z$ cannot be a correct choice and use this insights
to devise a heuristic approach that works nearly perfect given additive
distance data. We then introduce (in the Methods section) a simulation
environment for generating gene family histories with complex rate
variations and show that it is possible to recover best matches more
accurately than approximating them by best hits.

\section{Theory} 

\subsection{Notation and basic definitions} 

Let $T$ be a phylogenetic (gene) tree with leaf set $L$. For each gene
$x\in L$ we denote by $\sigma(x)$ the species within which it resides. We
write $L[s]=\{y\in L\mid\sigma(y)=s\}$ for the set of genes in species
$s$. Given a rooted tree $T$ with vertex set $V$, we write $\lca(x,y)$ for
the \emph{last common ancestor} of $x,y\in V$, i.e., the vertex most
distant from the root $\rho$ that is shared by the paths from $\rho$ to $x$
and $y$, respectively. For a leaf set $L'\subseteq L$ we define the rooted
tree $T[L']$ as the tree obtained from $T$ by retaining only the vertices
and edges along paths from the root to a leaf in $L'$ and suppressing all
vertices with degree $2$.

Consider a gene $x$ in species $s$. Among all genes in species $t\ne s$,
the best matches of $x$ are all those genes $y$ in species $t$ that have
the lowest $\lca(x,y)$. These $y$ are the closest relatives of $x$ in
species $t$. This concept is made precise in
\begin{definition} \cite{Geiss:18x} 
  \label{def:genclose}
  Let $T$ be a phylogenetic tree with leaf set $L$ (denoting genes) and
  $\sigma:L\to \mathscr{S}$ identifying the species
  $\sigma(x)\in\mathscr{S}$ in which a gene $x$ resides. Then $y\in L$ is a
  \emph{best match} of $x\in L$, in symbols $x\bmr y$, if
  $\lca(x,y)\preceq \lca(x,y')$ holds for all leaves $y'$ from species
  $\sigma(y')=\sigma(y)$.
\end{definition}
The best match relation $\bmr$ is reflexive (since $\lca(x,x)=x$), but it
is neither transitive nor symmetric. Its mathematical properties are
discussed in detail in \cite{Geiss:18x,rbmg-19}. In particular, all
orthologs of $x$ are among its best matches \cite{cobmg}.

The evolutionary relatedness of two taxa $x$ and $y$ is most directly
expressed by the divergence time $\tau(x,y)$, which is the total time
elapsed in both lineages since the last common ancestor of $x$ and
$y$. Here, we consider only the case that all leaves refer to extant genes
or taxa, i.e., $\tau(x,y)=2\hat\tau(\lca(x,y))$, where $\hat\tau$ is the
age of $\lca(x,y)$.  Divergence times are ultrametric by
definition. Furthermore, there is a well-known one-to-one correspondence
between isomorphism classes of dated, rooted, phylogenetic trees and
ultrametrics, cf.\ \cite{Boecker:98,Semple:03a}. The best match relation
$\bmr$ can thus also be defined in terms of divergence time: $x\bmr y$ if
and only if
\begin{equation}
  y\in\argmin_{y'\in L[t]} \tau(x,y')
  \label{eq:argmin}
\end{equation}
The distinction between best hits and best matches thus is simply the
distance function: best matches require divergence times, while best hits
use one of several (dis)similarity measures for sequence data. They are
equivalent under the Molecular Clock Hypothesis, which however fails for
most real life data sets.

\subsection{Reconciliation of gene tree and species tree}

Since genes evolve as part of species, we can expect that \emph{a priori}
knowlege of the species phylogeny can be helpful for understanding the
phylogeny of a gene family. This link is made precise by considering the
embedding of a gene tree $T$ into a species tree $S$.

We model the species tree $S$ as a planted tree with leaf set
$\mathscr{S}$, where $0_S$ is the planted root of $S$ with its only child
$\rho_S$. The embedding of the gene tree into the species tree is
formalized by the \emph{reconciliation map} $\mu: V(T)\to V(S)\cup E(S)$,
which maps duplications to the edges of $S$ and speciations to the inner
vertices $V^0(S)$ of the species tree. We follow here the notation of
\cite{cobmg}. Restricting ourselves to duplication/loss scenarios, i.e.,
disregarding horizontal gene transfer, the reconciliation map satisfies the
root constraint \AX{(R0)} $\mu(x) = 0_S$ if and only if $x = 0_T$; the leaf
constraint \AX{(R1)} $\mu(x)=\sigma(x)$ for $x\in L(T)$, the ancestor
preservation \AX{(R2)}, i.e.,
$x\prec_{T} y \implies \mu(x)\preceq_S \mu(y)$, and the following two
speciation constraints for all speciation vertices $\mu(x)\in V^0(S)$:
\AX{(R3.i)} $\mu(x) = \lca_S(\mu(v'),\mu(v''))$ for at least two distinct
children $v',v''$ of $x$ in $T$.  \AX{(R3.ii)} $\mu(v')$ and $\mu(v'')$ are
incomparable in $S$ for any two distinct children $v'$ and $v''$ of $x$ in
$T$ \cite{cobmg}. Equivalent axiom systems are considered e.g.\ in
\cite{Doyon:11,Rusin:14,Hel-17}. Such reconciliation maps satisfy
\begin{equation}
  \mu(x)\succeq_S \lca_S(\sigma(L(T(x)))),
  \label{eq:lca}
\end{equation}
i.e, an event $x\in V(T)$ in the gene tree cannot be mapped to a node in
the species tree below the last common ancestor of all the species In this
contribution we assume that $\mu$ in addition satisfies \AX{(R4)}: If
$\mu(\lca_T(x,y))=\mu(\lca_T(x,z))\in V^0(S)$, then
$\lca_S(\sigma(x),\sigma(y))=\lca_S(\sigma(x),\sigma(z))$. In essence,
\AX{(R4)} ensures that a single node in $T$ cannot represent two distinct
speciation events, i.e., that the gene tree $T$ is not ``less resolved''
than the species tree $S$ into which it is embedded.

The reconciliation map $\mu$ defines \emph{event labels} on the inner nodes
of the gene tree $T$, identifying $u$ as a duplication node if
$\mu(u)\in E(S)$ and as speciation if $\mu(u)\in V(S)$. While it is
possibly to find a reconciliation map $\mu$ for every pair of gene and
species tree \cite{Gorecki:06}, this is no longer true when event labels on
$T$ are given \cite{HernandezRosales:12a,Hellmuth2017}. Conversely, $T$ and
$S$ imply constraints on the event labels, identifying nodes that have to
be duplications under \emph{any} reconciliation map \cite{cobmg}.  Here, we
characterize these nodes further.  We start from the following technical
results:
\begin{lemma}[{\cite[Lemma 2]{cobmg}}]
  Let $T$ be a gene tree, $S$ be a species tree and
  $\mu:V(T)\to V(S)\cup E(S)$ be a reconciliation map without horizontal
  gene transfer that does not necessarily satisfy \AX{(R4)}.  Let
  $x\in V(T)$ be a vertex with $\mu(x)\in V^0(S)$.  Then,
  $\sigma(L(T(v')))\cap \sigma(L(T(v''))) = \emptyset$ for all distinct
  $v',v''\in \child(x)$.
  \label{lem:cobmg}
\end{lemma}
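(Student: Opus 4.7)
The plan is a short proof by contradiction using axioms \AX{(R1)}, \AX{(R2)}, and \AX{(R3.ii)}; note that \AX{(R4)} is explicitly \emph{not} needed. Suppose for contradiction that there are distinct children $v',v''\in\child(x)$ and some species $s\in\mathscr{S}$ with $s\in\sigma(L(T(v')))\cap\sigma(L(T(v'')))$. Then I can pick leaves $y'\in L(T(v'))$ and $y''\in L(T(v''))$ with $\sigma(y')=\sigma(y'')=s$.

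Next I would propagate the species information upward along the two branches at $x$. By \AX{(R1)}, $\mu(y')=\sigma(y')=s$ and $\mu(y'')=\sigma(y'')=s$, both of which sit at the leaf $s\in V(S)$. Since $y'\prec_T v'$ and $y''\prec_T v''$, ancestor preservation \AX{(R2)} gives $s=\mu(y')\preceq_S \mu(v')$ and $s=\mu(y'')\preceq_S \mu(v'')$. Consequently, both images $\mu(v')$ and $\mu(v'')$ lie on the unique path in $S$ from the leaf $s$ up to the root, and are therefore $\preceq_S$-comparable.

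The hypothesis $\mu(x)\in V^0(S)$ means $x$ is a speciation node, so \AX{(R3.ii)} applies and asserts that $\mu(v')$ and $\mu(v'')$ are \emph{incomparable} in $S$. This contradicts what the previous step established, and the claim follows.

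The only subtlety I anticipate is that $\mu(v')$ (respectively $\mu(v'')$) may be an edge of $S$ rather than a vertex, since $v'$ might be a duplication node. Thus I would spend a sentence fixing the convention that for an edge $e=uv\in E(S)$ with $u=\parent_S(v)$, the relation $w\preceq_S e$ means $w\preceq_S v$, and comparability of two elements of $V(S)\cup E(S)$ is the natural extension of $\preceq_S$ to subdivision points. Under this standard convention, the argument that ``both $\mu(v')$ and $\mu(v'')$ lie on the leaf-to-root path through $s$, hence are comparable'' goes through verbatim regardless of whether these images are vertices or edges, so the contradiction with \AX{(R3.ii)} is unaffected.
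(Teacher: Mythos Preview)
Your argument is correct. Note, however, that the paper does not actually supply its own proof of this lemma: it is quoted verbatim as \cite[Lemma~2]{cobmg}, and later the text refers to ``reusing the contradiction argument in \cite{cobmg}[Lemma~2]'', so the original proof is evidently also by contradiction along the lines you sketch. One tiny quibble: you write $y'\prec_T v'$, but since $v'$ could itself be a leaf you only have $y'\preceq_T v'$; the conclusion $\mu(y')\preceq_S\mu(v')$ of course still holds (trivially if $y'=v'$, and via \AX{(R2)} otherwise), so this does not affect the argument.
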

Let us first consider the case of binary gene trees:
\begin{lemma}
  Let $T$ be a binary gene tree, $S$ be a species tree, and
  $\mu:V(T)\to V(S)\cup E(S)$ be a reconciliation map without horizontal
  gene transfer that does not necessarily satisfy \AX{(R4)}.  Let
  $x,y\in L(T)$ be two genes with $\sigma(x)\ne\sigma(y)$.  If
  $\lca_S(\sigma(x),\sigma(y))\prec \mu(\lca_T(x,y))$, then $\lca_T(x,y)$
  is a duplication event.
  \label{lem:dupli-bin}
\end{lemma}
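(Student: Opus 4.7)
The plan is to argue by contraposition: assume $u := \lca_T(x,y)$ is \emph{not} a duplication, so $\mu(u)\in V^0(S)$ is a speciation vertex, and show that this forces $\lca_S(\sigma(x),\sigma(y))=\mu(u)$, contradicting the strict inequality in the hypothesis. Recall from (\ref{eq:lca}) that $\lca_S(\sigma(x),\sigma(y))\preceq_S\mu(u)$ always holds, so only the reverse inequality needs work.

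First I would exploit that $T$ is binary. Let $v',v''$ be the two children of $u$; since $u=\lca_T(x,y)$ and $x\ne y$, we may assume $x\preceq_T v'$ and $y\preceq_T v''$. Applying (R2) along these descent chains yields $\sigma(x)=\mu(x)\preceq_S\mu(v')$ and $\sigma(y)=\mu(y)\preceq_S\mu(v'')$. Because $\mu(u)\in V^0(S)$, the speciation axioms apply: (R3.i) gives $\mu(u)=\lca_S(\mu(v'),\mu(v''))$, and (R3.ii) gives that $\mu(v')$ and $\mu(v'')$ are incomparable in $S$.

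The core step is to show $\lca_S(\sigma(x),\sigma(y))=\lca_S(\mu(v'),\mu(v''))$. The ``$\preceq$'' direction is immediate, since any common ancestor of $\mu(v')$ and $\mu(v'')$ in $S$ is also a common ancestor of $\sigma(x)$ and $\sigma(y)$. For the ``$\succeq$'' direction, set $a:=\lca_S(\sigma(x),\sigma(y))$. Because ancestors of a leaf form a chain, $a$ is comparable to $\mu(v')$ (both lie in the ancestor chain of $\sigma(x)$), and similarly to $\mu(v'')$. If $a\prec_S\mu(v')$ then $a$ lies strictly inside the subtree rooted at $\mu(v')$; being an ancestor of $\sigma(y)$ it would then also have to lie inside the subtree rooted at $\mu(v'')$, contradicting the incomparability of $\mu(v')$ and $\mu(v'')$ granted by (R3.ii). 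Hence $a\succeq_S\mu(v')$ and, symmetrically, $a\succeq_S\mu(v'')$; thus $a$ is a common ancestor of $\mu(v')$ and $\mu(v'')$, giving $a\succeq_S\lca_S(\mu(v'),\mu(v''))$. Combined with (R3.i) this yields $\lca_S(\sigma(x),\sigma(y))=\mu(u)$, the desired contradiction.

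The only real obstacle is the case analysis in the last step, where one must convert incomparability of $\mu(v')$ and $\mu(v'')$ into disjointness of their subtrees in $S$ and conclude that $\sigma(x)$ and $\sigma(y)$, lying in those disjoint subtrees, have their species-tree lca pinned exactly at $\lca_S(\mu(v'),\mu(v''))$. Lemma~\ref{lem:cobmg} could be invoked as an alternative route to the disjointness (via $\sigma(L(T(v')))\cap\sigma(L(T(v'')))=\emptyset$), but (R3.ii) applied directly in $S$ seems cleaner. Binarity of $T$ is used only to guarantee that a single pair of children $v',v''$ separates $x$ from $y$; this is the ingredient that would need to be revisited when generalising beyond binary gene trees.
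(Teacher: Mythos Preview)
Your proposal is correct and follows essentially the same route as the paper: assume $u=\lca_T(x,y)$ is a speciation, use binarity to obtain the two children $v',v''$ separating $x$ from $y$, apply \AX{(R2)}, \AX{(R3.i)}, \AX{(R3.ii)}, and conclude $\lca_S(\sigma(x),\sigma(y))=\mu(u)$, contradicting the strict inequality. The only cosmetic difference is that the paper also invokes Lemma~\ref{lem:cobmg} for disjointness of the species sets below $v'$ and $v''$, whereas you argue the final equality directly from the incomparability in \AX{(R3.ii)}; you already anticipated this alternative in your closing paragraph.
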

\begin{proof}
  Assume for contradiction $u:=\lca_T(x,y)$ is a speciation event, i.e.,
  $\mu(u)\in V^0(S)$. Let $v'$ and $v''$ be the two children of $u$ in $T$.
  Observe that $u:=\lca_T(x,y)$ implies that $x\in L(T(v'))$ and
  $y\in L(T(v''))$ or \emph{vice versa}.  W.l.o.g.\ we assume that
  $x\in L(T(v'))$ and $y\in L(T(v''))$.  By \AX{(R3.i)} and \AX{(R3.ii)},
  $\mu(u) = \lca_S(\mu(v'),\mu(v''))$ and, in particular, $\mu(v')$ and
  $\mu(v'')$ are incomparable in $S$.  Then by Lemma \ref{lem:cobmg}, we
  have $\sigma(L(T(v')))\cap \sigma(L(T(v''))) = \emptyset$.  This and
  \AX{R2} implies that $\mu(v')\succeq_S\sigma(x)$ and
  $\mu(v')\succeq_S\sigma(y)$.  The latter two arguments imply that
  $\lca_S(\sigma(x),\sigma(y))=\mu(u)$; a contradiction.
\end{proof}

\begin{figure}[t]
	\begin{tabular}{lcr}
		\begin{minipage}{0.5\textwidth}
			\begin{center}
				\includegraphics[width=\textwidth]{./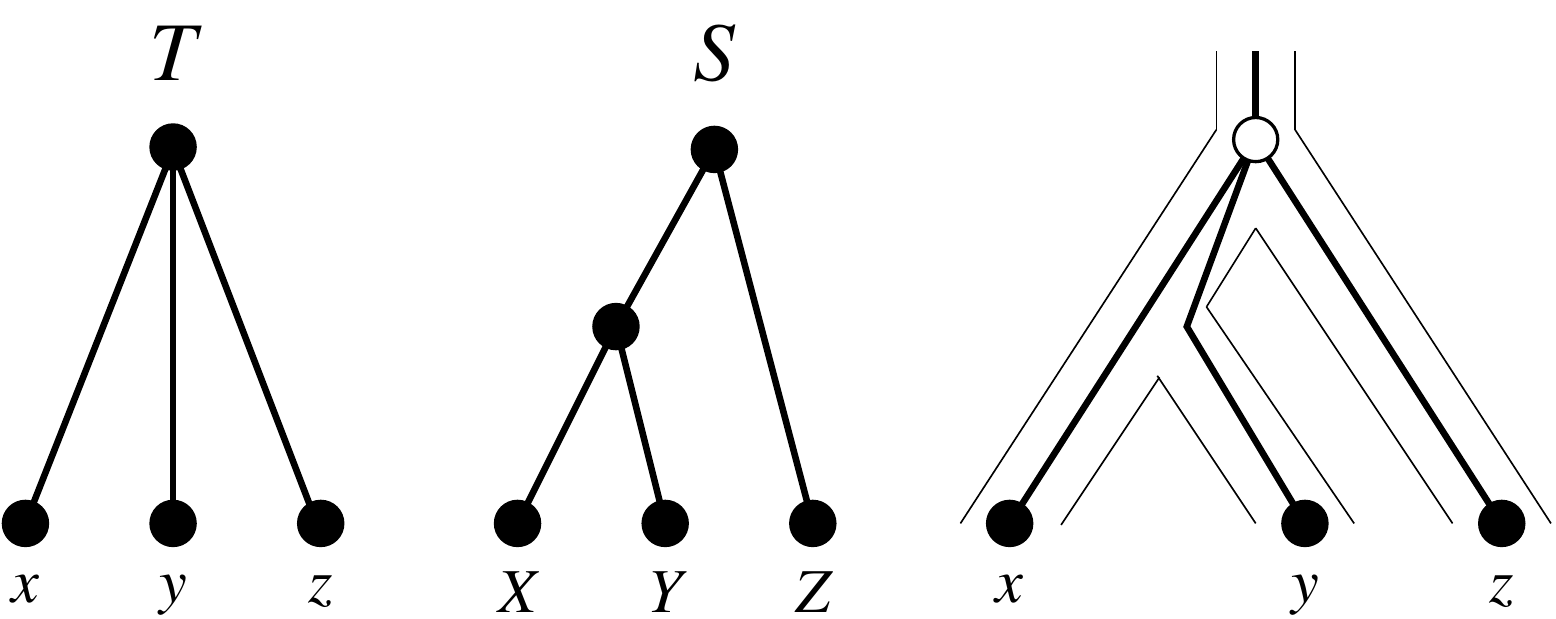}
			\end{center}
		\end{minipage}
		&    & 
		\begin{minipage}{0.4\textwidth}
			\caption{The reconciliation $\mu$ with $\mu(\lca_T(x,y,z))=\lca_S(X,Y,Z)$
				satisfied \AX{(R1)}, \AX{(R2)}, \AX{(R3.i)}, and \AX{(R3.ii)} but does
				not admit an unambiguous interpretation of $\lca_T(x,y,z)$ as single
				event: it confounds the speciation separating $Z$ and $\lca_S(Y,X)$
				with a gene duplication leading the ancestor of $x$ and $y$ or with the
				speciation separating $X$ and $Y$. In either interpretation, the
				reconciliation map $\mu$ not correspond to a mechanistic explanation of
				the gene family history.}
			\label{fig:Marc-counter}
		\end{minipage}
	\end{tabular}
\end{figure}

The assumption that $T$ is binary is necessary here as the example in
Fig.~\ref{fig:Marc-counter} shows. Such reconciliations, however, cannot be
meaningfully interpreted in terms of evolutionary events. Instead, the root
of $T$ confounds the duplication leading to $x$ and $y$ and the speciation
separating $\lca_S(\sigma(x),\sigma(y))$ from $\sigma(z)$. To suppress such
undesirable cases, we in addition require that $\mu$ satisfies axiom
\AX{(R4)}. In essence, \AX{(R4)} forbids to map two distinct speciation
events to the same vertex of $S$.
\begin{lemma}
  Let $\mu:V(T)\to V(S)\cup E(S)$ be a reconciliation map without
  horizontal gene transfer that satisfies \AX{(R4)} and let $x,y\in L(T)$
  be two genes with $\sigma(x)\ne\sigma(y)$.  If
  $\lca_S(\sigma(x),\sigma(y))\prec \mu(\lca_T(x,y))$, then $\lca_T(x,y)$ is
  a duplication event.
  \label{lem:dupli}
\end{lemma}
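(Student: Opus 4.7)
I would argue by contradiction along the lines of the proof of Lemma \ref{lem:dupli-bin}. Assume $u:=\lca_T(x,y)$ is a speciation, so $\mu(u)\in V^0(S)$, and let $v',v''$ be the distinct children of $u$ containing $x$ and $y$ respectively. The binary argument fails in general only because the pair $w',w''$ of children of $u$ promised by \AX{(R3.i)}, for which $\mu(u)=\lca_S(\mu(w'),\mu(w''))$, need not coincide with $(v',v'')$. The strategy is therefore two-step: (i) establish $\lca_S(\sigma(a),\sigma(b))=\mu(u)$ for a pair of leaves $a,b$ drawn from the subtrees rooted at $w'$ and $w''$; (ii) transport this equality back to $(x,y)$ using axiom \AX{(R4)}.

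For step~(i), pick any $a\in L(T(w'))$ and $b\in L(T(w''))$. Then \AX{(R2)} gives $\sigma(a)\preceq_S\mu(w')$ and $\sigma(b)\preceq_S\mu(w'')$, while \AX{(R3.ii)} forces $\mu(w'),\mu(w'')$ to be incomparable. Exactly as in the binary case, the incomparability implies that every common ancestor of $\sigma(a)$ and $\sigma(b)$ in $S$ is also a common ancestor of $\mu(w')$ and $\mu(w'')$, whence $\lca_S(\sigma(a),\sigma(b))=\lca_S(\mu(w'),\mu(w''))=\mu(u)$. For step~(ii) the degenerate case $\{v',v''\}=\{w',w''\}$ is handled by applying step~(i) directly to the pair $(x,y)$. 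Otherwise, after relabelling we may assume $w'\notin\{v',v''\}$; then any $a\in L(T(w'))$ satisfies $a\ne x$ and $\lca_T(x,a)=u=\lca_T(x,y)$, so \AX{(R4)} yields $\lca_S(\sigma(x),\sigma(y))=\lca_S(\sigma(x),\sigma(a))$. A second application of \AX{(R4)}, based at $a$ and using $\lca_T(a,x)=u=\lca_T(a,b)$, chains this with step~(i) to give $\lca_S(\sigma(x),\sigma(y))=\lca_S(\sigma(a),\sigma(b))=\mu(u)$, the desired contradiction with $\lca_S(\sigma(x),\sigma(y))\prec_S\mu(u)$.

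The main obstacle is routine bookkeeping to guarantee distinctness of the triples handed to \AX{(R4)}: \AX{(R3.ii)}-incomparability rules out $\sigma(a)=\sigma(x)$, and leaves in subtrees rooted at distinct children of $u$ are automatically distinct. The only mildly awkward sub-case is $L(T(w''))=\{x\}$, which forces $b=x$ and makes the second \AX{(R4)} call degenerate; but there step~(i) already reads $\lca_S(\sigma(a),\sigma(x))=\mu(u)$ directly, and a single application of \AX{(R4)} based at $x$ suffices to finish.
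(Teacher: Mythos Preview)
Your argument is correct and follows essentially the same route as the paper's proof: assume $u=\lca_T(x,y)$ is a speciation, invoke \AX{(R3.i)} to obtain children $w',w''$ with $\mu(u)=\lca_S(\mu(w'),\mu(w''))$, use \AX{(R3.ii)} and \AX{(R2)} to conclude $\lca_S(\sigma(a),\sigma(b))=\mu(u)$ for leaves $a\in L(T(w'))$, $b\in L(T(w''))$, and finally transport this to $(x,y)$ via \AX{(R4)}.

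The only organisational difference lies in the transport step. The paper first argues in $S$ that at least one of $\lca_S(\sigma(x),\sigma(a))$ and $\lca_S(\sigma(x),\sigma(b))$ equals $\mu(u)$, and then applies \AX{(R4)} once with pivot $x$ (the fact that the chosen leaf lies in a different child-subtree than $x$, hence $\lca_T(x,\cdot)=u$, is implicit there). You instead ensure $w'\notin\{v',v''\}$ up front and chain two \AX{(R4)} calls, first with pivot $x$ and then with pivot $a$. Your version avoids the ``at least one of'' case split in $S$ and makes the verification of $\lca_T(\cdot,\cdot)=u$ more transparent; the paper's version uses one fewer invocation of \AX{(R4)}. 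Neither buys anything substantial over the other. Your handling of the degenerate sub-case $b=x$ is fine and in fact unnecessary for the logic (the chain still terminates correctly), but it shows you have checked the boundary cases.
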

\begin{proof}
  We assume that $T$ is non-binary since the binary case is covered already
  by Lemma \ref{lem:dupli-bin}. Moreover, we assume, for contradiction,
  that $u:=\lca_T(x,y)$ is a speciation event, i.e., $\mu(u)\in V^0(S)$.
  Let $v_x$ and $v_y$ be the children of $u$ with $x\preceq_T v_x$ and
  $y\preceq_T v_y$; thus we have $\sigma(x)\in \sigma(L(T(v_x)))$ and
  $\sigma(y)\in \sigma(L(T(v_y)))$.  Since $u=\lca_T(x,y)$, $v_x$ and $v_y$
  are incomparable in $T$ and hence $v_x\neq v_y$. By \AX{(R3.i)},
  $\mu(v_x)$ and $\mu(v_y)$ are incomparable in $S$.  Lemma \ref{lem:cobmg}
  implies $\sigma(L(T(v')))\cap \sigma(L(T(v''))) = \emptyset$ for all
  distinct children $v'$ and $v''$ of $u$. The latter two fact together
  with \AX{(R2)} imply
  $\lca_S(\sigma(x),\sigma(y)) = \lca_S(\mu(v_x),\mu(v_y))\prec \mu(u)$.
  By \AX{(R3.i)}, $\mu(u) = \lca_S(\mu(v'),\mu(v''))$ for some children
  $v'$ and $v''$ of $u$, and thus
  $\lca_S(\mu(v'),\mu(v'')) = \lca_S(\sigma(z'),\sigma(z''))$ for some
  leaves $z'\in L(T(v'))$ and $z''\in L(T(v''))$ from different species
  $\sigma(z')\ne\sigma(z'')$.

  We proceed by showing that for at least one of $\sigma(z')$ and
  $\sigma(z'')$ we have
  $\lca_S(\sigma(x),\sigma(z')) = \lca_S(\sigma(z'),\sigma(z''))$ or
  $\lca_S(\sigma(x),\sigma(z'')) = \lca_S(\sigma(z'),\sigma(z''))$.
  Suppose that
  $\lca_S(\sigma(x),\sigma(z')) \neq \lca_S(\sigma(z'),\sigma(z''))$.
  Hence,
  $\lca_S(\sigma(x),\sigma(z')) \prec_S \lca_S(\sigma(z'),\sigma(z'')) =
  \mu(u)$.  Therefore,
  $\lca_S(\sigma(x),\sigma(z'')) = \lca_S(\sigma(z'),\sigma(z''))$.
  Similarily, if
  $\lca_S(\sigma(x),\sigma(z'')) \neq \lca_S(\sigma(z'),\sigma(z''))$, then
  $\lca_S(\sigma(x),\sigma(z')) = \lca_S(\sigma(z'),\sigma(z''))$.
  Hence, assume w.l.o.g.\ that 
  $\lca_S(\sigma(x),\sigma(z')) = \lca_S(\sigma(z'),\sigma(z'')) \neq
  \lca_S(\sigma(x),\sigma(y))$
  Now, by contraposition of \AX{(R4)}, we have 
  $\mu(u)  = \mu(\lca_T(x,y))\neq \mu(\lca_T(x,z')) = \mu(u)$; a contradiction.
\end{proof}

Lemma \ref{lem:dupli} conveniently generalizes to sets of genes:
\begin{corollary}
  Let $\mu:V(T)\to V(S)\cup E(S)$ be a reconciliation map without
  horizontal gene transfer that satisfies \AX{(R4)} and let
  $A\subseteq L(T)$ with $|\sigma(A)|\ge 2$. If
  $\lca_S(\sigma(A))\prec \mu(\lca_T(A))$, then $\lca_T(A)$ is a duplication
  event.
  \label{cor:dupli}
\end{corollary}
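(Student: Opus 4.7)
The plan is to reduce the corollary to Lemma \ref{lem:dupli} by extracting, from the set $A$, a single pair of genes $x,y$ whose last common ancestor in $T$ is exactly $\lca_T(A)$ and whose species are distinct. Suppose for contradiction that $u := \lca_T(A)$ is a speciation event, i.e., $\mu(u)\in V^0(S)$; the goal is to contradict Lemma \ref{lem:dupli}.

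First I would observe that since $u = \lca_T(A)$, the elements of $A$ cannot all descend from a single child of $u$. Hence there exist distinct children $v', v''$ of $u$ and genes $x\in A\cap L(T(v'))$, $y\in A\cap L(T(v''))$, which immediately gives $\lca_T(x,y) = u$. Next, because $u$ is assumed to be a speciation, Lemma \ref{lem:cobmg} applies and yields $\sigma(L(T(v')))\cap\sigma(L(T(v''))) = \emptyset$. In particular $\sigma(x)\ne\sigma(y)$, so $x$ and $y$ satisfy the standing hypothesis of Lemma \ref{lem:dupli}.

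Now I would verify the strict inequality needed. Since $\{\sigma(x),\sigma(y)\}\subseteq\sigma(A)$, the standard monotonicity of $\lca_S$ gives $\lca_S(\sigma(x),\sigma(y))\preceq_S \lca_S(\sigma(A))$. Combining this with the hypothesis $\lca_S(\sigma(A))\prec_S \mu(\lca_T(A)) = \mu(u)$ yields $\lca_S(\sigma(x),\sigma(y))\prec_S \mu(\lca_T(x,y))$, which is precisely the premise of Lemma \ref{lem:dupli}. That lemma then forces $\lca_T(x,y) = u$ to be a duplication event, contradicting our assumption that $u$ is a speciation. Hence $\lca_T(A)$ must be a duplication.

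The argument is essentially a reduction, so no serious obstacle is anticipated; the only step requiring care is producing $x,y\in A$ with $\sigma(x)\ne\sigma(y)$ and $\lca_T(x,y)=\lca_T(A)$ simultaneously. The first condition is delicate in the non-binary setting because leaves from $A$ sitting under the same child of $u$ could share species, but as sketched above this is resolved by picking $x,y$ from \emph{different} children's subtrees and then invoking Lemma \ref{lem:cobmg} (which uses only the assumption that $u$ is a speciation and does not require \AX{(R4)}) to conclude that such a choice automatically separates species.
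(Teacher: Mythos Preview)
Your proof is correct and uses essentially the same ingredients as the paper's: reduce to a pair $x,y\in A$ with $\lca_T(x,y)=\lca_T(A)$ and then invoke Lemma~\ref{lem:cobmg} and Lemma~\ref{lem:dupli}. The only organizational difference is that the paper fixes such a pair $x,y$ first and then case-splits on whether $\sigma(x)=\sigma(y)$ (handling the equal-species case directly via Lemma~\ref{lem:cobmg}), whereas you assume speciation up front and use Lemma~\ref{lem:cobmg} to \emph{force} $\sigma(x)\neq\sigma(y)$ for any pair chosen from distinct children's subtrees, thereby collapsing the two cases into one; this is a mild streamlining but not a different argument.
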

\begin{proof}
  Note first that $\lca_T(A) = \lca_T(x,y)$ for some $x,y\in A$.  Assume
  first $\sigma(x)\neq \sigma(y)$.  Thus, 
  $\lca_S(\sigma(A))\prec \mu(\lca_T(A))$ implies
  $\lca_S(\sigma(x),\sigma(y)) \preceq \lca_S(\sigma(A)) \prec
  \mu(\lca_T(A))=\mu(\lca_T(x,y))$.  Hence, the statement follows from
  Lemma \ref{lem:dupli}. If $\sigma(x)=\sigma(y)$, then
  $\lca_T(A) = \lca_T(x,y)$ implies that there are distinct children $v_x$
  and $v_y$ of $\lca_T(A)$ with $v_x\succeq x$ and $v_y\succeq y$. Thus,
  $\lca_T(A) = \lca_T(v_x,v_y)$.  However, since $\sigma(x)=\sigma(y)$ we
  have $\sigma(L(T(v_x)))\cap \sigma(L(T(v_y)))\neq \emptyset$.  Thus,
  Lemma \ref{lem:cobmg} implies that $\mu(\lca_T(A))\notin V^0(S)$ and
  hence, $\lca_T(A)$ is duplication.
\end{proof}

\subsection{Trees and (dis)similarities} 

Neither the divergence times nor the $\lca$ function of the phylogenetic
tree $T$ can be measured directly. The next-best choice is to work with an
evolutionary distance, which measures the number of evolutionary events
that have taken place to separate two taxa. For each edge $e=uv$ in $T$ it
is given by $\ell(e)=\int_{\hat\tau(u)}^{\hat\tau(v)}\mu_e(t)dt$, where
$\mu_e(t)$ is the rate of evolution. In general $\mu_e(t)$ depends both on
the lineage, and thus the individual edges in $T$, as well as on the exact
point in time along $e$. It associates with each edge $e$ a measure
$\ell(e)$ of changes incurred, and thus an additive distance. If
$\mu_e(t)=\mu_0$ is constant, we simply have
$d_{\ell,T}(x,y)=\mu_0 \tau(x,y)$. This is the well-known Molecular Clock
Hypothesis \cite{Zuckerkandl:62,Kumar:05}.

In general, we consider $\ell: E(T)\to\mathbb{R}^+$ simply as an assignment
of positive lengths to the edges of $T$, which we interpret as a measure
proportional to the number of evolutionary events. This gives rise to a
metric distance function $d_{T,\ell}(x,y)$ on $L$ defined as the sum of the
lengths $\ell(e)$ of the edges $e$ along the unique path connecting $x$ and
$y$ in $T$. From $T$ we obtain an associated \emph{unrooted} tree
$\unrooted{T}$ by (i) omitting the planted root $0_T$ and its incident
edge, and (ii), in case the root $\rho$ in $T$ has exactly two children
$u_1$ and $u_2$, by replacing the path $u_1\rho u_2$ by a single edge
$u_1u_2$ with length $\ell(u_1u_2):=\ell(u_1\rho)+\ell(\rho u_2)$. Note
that the dissimilarity function $\ell$ is by construction the same on $T$
and $\unrooted{T}$. Thus $\unrooted{T}$ determines $T$ up to the position
of the root, i.e., $T$ is obtained from $\unrooted{T}$ by inserting the
root into an edge of $\unrooted{T}$ or declaring an inner vertex of
$\unrooted{T}$ as the root.  As for rooted trees, we define the restriction
$\unrooted{T}[L']$ for some subset $L'\subseteq L$ by retaining only the
vertices and edges along the paths between pairs of vertices in $L'$ and
then suppressing all vertices of degree $2$. We note that
$\unrooted{T[L']}=\unrooted{T}[L']$.

A dissimilarity $d$ on $L$ is called \emph{additive} if there is an
unrooted tree $\unrooted{T}$ with edge lengths $\ell$ such that
$d=d_{\ell,\unrooted{T}}$. A key result in mathematical phylogenetics
\cite{SimoesPereira:69,Buneman:74} characterizes additive (pseudo)metrics
as those that satisfy the \emph{four point condition}. It states that $d$
is additive if and only if the restriction of $d$ to each subset $L'$ of
$L$ with $|L'|=4$, usually called a \emph{quartet}, is additive and thus
determines a tree on four leaves. Furthermore, the unrooted tree
$\unrooted{T}$ is uniquely defined by $d$. In principle, therefore,
distance data completely determines a phylogenetic tree up to the position
of the root.

The results of \cite{SimoesPereira:69,Buneman:74} furthermore imply that
$\overline{T}$ can be expressed in terms of its four-taxa subtrees.  This
provides us with a natural possibility to consider only ``local''
topologies instead of the having to construct the unrooted tree
$\unrooted{T}$ explicitly. To this end, we consider the restrictions
$\unrooted{T}[p,q,r,s]$ of $\unrooted{T}$ to four distinct leaves
$p,q,r,s\in L$ and define the \emph{quartet relation}
\cite{Sattah:77,Fitch:81} $(pq|rs)$ if there is an edge $e$ in
$\unrooted{T}$, and thus in $\unrooted{T}[p,q,r,s]$, such that $\{p,q\}$
and $\{r,s\}$ are in different connected components of the forest obtained
by removing $e$ from $\unrooted{T}$ or $\unrooted{T}[p,q,r,s]$.
Equivalently, we have \cite{Sattah:77,Fitch:81}
\begin{equation}
  \begin{split}
    &(pq|rs) \iff \\
    &d(p,q)+d(r,s) < d(p,r)+d(q,s), d(p,s)+d(q,r)\,.
  \end{split}
  \label{eq:quartet}
\end{equation}
In fact, for additive metrics, the two distance sums on the r.h.s.\ are
equal \cite{SimoesPereira:69,Buneman:74}. All three terms are equal if and
only if the four points form a star, whence the existence of a separating
edge requires the strict inequality. 
By a slight abuse of notation we write
$\unrooted{T}[p,q,r,s]=(pq|rs)$ if Equ.\ (\ref{eq:quartet}) holds, and
$\unrooted{T}[p,q,r,s]=\bm{\times}$ if no quartet exists on these four
leaves, i.e., if $\unrooted{T}[p,q,r,s]$ is the star tree. 

\subsection{From Quartets to rooted triples}
In a \emph{planted phylogenetic tree} $T$ with leaf set $L\cup\{0_T\}$ all
inner vertices have degree at least $3$. The special leaf $0_T$ identifies
the ancestral state. Its only neighbor is the root $\rho_T$. The vertex set
of $T$ is endowed with a partial order $\prec$ such that $x\preceq y$
whenever $y$ lies along the unique path connecting $x$ and the planted root
$0_T$ and thus also the root $\rho_T$. Thus the leaves are the minimal
elements w.r.t.\ $\prec$. Furthermore, for $A \subseteq L$ we define
$\lca(A)=\min\{z\mid x\preceq z \text{ for all } x\in A\}$,
where the minimum is taken w.r.t.\ the partial order $\prec$.
For every $u\in V(T)$, we denote by $T(u)$ the
subtree of $T$ rooted in $u$. It will sometimes be useful to consider
$T(u)$ as planted tree by including the unique parent $v$ of $u$ and the
edge $vu$. The leaf set of $T(u)$ will be denoted by $L(T(u))$.

The most common method to specify the root of a phylogenetic tree is the
use of so-called outgroups, that is, additional taxa that are known \emph{a
  priori} to be outside a monophyletic group of interest. Given a planted
(or rooted) phylogenetic tree, on the other hand, monophyletic groups are
the leaf sets of a subtree, i.e., $L'$ is a monophyletic group if and only
if there is a vertex $u\in V(T)$ such that $L'=L(T(u))$. Every leaf
$x\in L\setminus L'$ is an outgroup for $L'$.

Every edge in an unrooted tree $\unrooted{T}$ defines a split $L'|L''$ of
$L$, where $L'$ and $L''$ are the leaves in the connected components of
$\unrooted{T}\setminus e=\unrooted{T'}\cupdot\unrooted{T''}$.  At most one
of the two subtrees $\unrooted{T'}$ and $\unrooted{T''}$ contains the root
of the underlying phylogenetic tree $T$.  If the root is not contained in
$\unrooted{T'}$, then the tree $T'\cup\{e\}$ planted at the endpoint of $e$
describes a monophyletic group. In this case all $x\in L''$ are outgroups
for $T'$. Which subtrees of $\unrooted{T}$ correspond to monophyletic
groups is determined by the position of the root, and therefore requires
external information.

\begin{figure}
  \begin{center}
    \includegraphics[width=0.8\columnwidth]{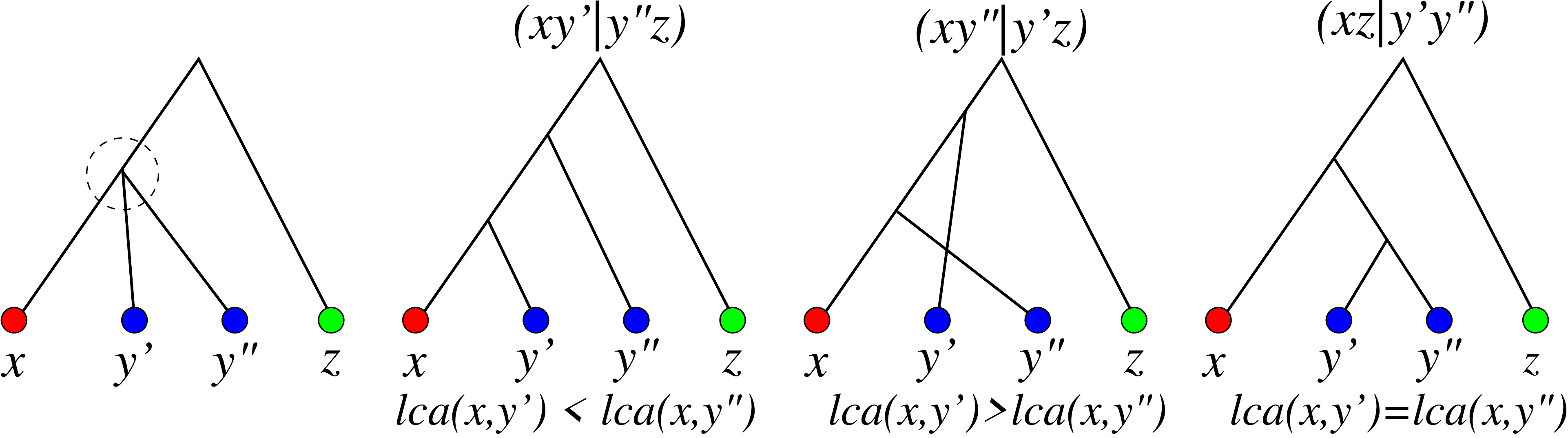}
  \end{center}
  \caption{Relation of last common ancestors $\lca(x,y')$ and
    $\lca(x,y'')$, resp., with quadruples on $\{x,y',y'',z\}$ with a
    trusted outgroup $z$.}
  \label{fig:wtf}
\end{figure}

It will be convenient in the following to define outgroups not only for
monophyletic groups.
\begin{definition} 
  For a phylogenetic tree $T$ with leaf set $L$, consider a subset
  $L'\subseteq L$ and a leaf $z\in L\setminus L'$. We say that $z$ is an
  \emph{outgroup} for $L'$ if $\lca(L')\prec \lca(L',z)$. 
\end{definition}

Let us now return to the quartets of $\unrooted{T}$. The following simple
result, illustrated in Fig.\ \ref{fig:wtf}, shows that quartets can be used to
infer inequalities between $\lca$ vertices in $T$ provided one of the four
leafs is known to be an outgroup for the other three:
\begin{lemma}
  Suppose $z$ is an outgroup for $\{x,y',y''\}$ in $T$. If
  $\unrooted{T}[x,y',y'',z]$ is fully resolved, then
  \begin{itemize}
  \item[(i)] $\lca(x,y')=\lca(x,y'')$ iff 
    $\unrooted{T}[x,y',y'',z]=(xz|y'y'')$,
  \item[(ii)] $\lca(x,y')\prec\lca(x,y'')$ iff 
    $\unrooted{T}[x,y',y'',z]=(xy'|y''z)$, \\and 
  \item[(iii)] $\lca(x,y')\succ\lca(x,y'')$ iff
    $\unrooted{T}[x,y',y'',z]=(xy''|y'z)$.
  \end{itemize}
  Otherwise,
  $\unrooted{T}[x,y',y'',z]=\bm{\times}$ and $\lca(x,y')=\lca(x,y'')$.
\label{lem:ineq}
\end{lemma}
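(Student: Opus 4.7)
The plan is to analyze the structure of $\unrooted{T}[x,y',y'',z]$ by a case distinction on how the three leaves $x$, $y'$, $y''$ are distributed among the child-subtrees of $u:=\lca_T(x,y',y'')$. The outgroup hypothesis controls the attachment of $z$ in the restriction, so both the quartet topology and the three pairwise last common ancestors can be read off simultaneously.

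First I would fix the position of $z$ in the restriction. Let $v:=\lca_T(x,y',y'',z)$. Since $z$ is an outgroup for $\{x,y',y''\}$, we have $u\prec v$. In the unrooted restriction on the four leaves, every vertex strictly above $u$ on the path from $u$ to $z$, including $v$ itself, has exactly two neighbors in $\unrooted{T}[x,y',y'',z]$ (one toward $z$ and one toward $u$) and is therefore suppressed. Hence $z$ attaches directly by a single pendant edge to (the vertex representing) $u$, and the remainder of $\unrooted{T}[x,y',y'',z]$ coincides with $\unrooted{T}[x,y',y'']$.

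Next I would carry out the case analysis. Since $u=\lca_T(x,y',y'')$, the three leaves occupy at least two child-subtrees of $u$ and no single child-subtree contains all three, leaving four mutually exclusive cases. (a)~All three lie in distinct child-subtrees: then $u$ has degree four in the restriction and $\unrooted{T}[x,y',y'',z]=\bm{\times}$, while $\lca_T(x,y')=\lca_T(x,y'')=u$. (b)~$y'$ and $y''$ share a child-subtree of $u$ while $x$ lies in another: the restriction contains the internal edge $u\,w$ with $w:=\lca_T(y',y'')\prec u$, and removing this edge separates $\{x,z\}$ from $\{y',y''\}$, giving the quartet $(xz\mid y'y'')$ together with $\lca_T(x,y')=\lca_T(x,y'')=u$. (c)~$x$ and $y'$ share a child-subtree, $y''$ alone: symmetrically the quartet is $(xy'\mid y''z)$ and $\lca_T(x,y')\prec u=\lca_T(x,y'')$. (d)~$x$ and $y''$ share a child-subtree, $y'$ alone: the quartet is $(xy''\mid y'z)$ and $\lca_T(x,y'')\prec u=\lca_T(x,y')$.

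Since the four cases are exhaustive and mutually exclusive, and each yields a unique pairing of quartet topology with LCA relation, all stated equivalences follow at once: (i)--(iii) cover the three resolved sub-cases (b)--(d), and the star alternative (a) gives the \emph{otherwise} clause. The only real subtlety is ensuring that $z$ attaches to the restriction at $u$ rather than at some vertex inside $T(u)$; this is precisely where the outgroup hypothesis $u\prec v$ is used. Without it, $z$ could integrate into the spanning structure of $\{x,y',y''\}$ and any of the three resolved quartet topologies could arise independently of the relative positions of $\lca_T(x,y')$ and $\lca_T(x,y'')$, breaking every one of the equivalences.
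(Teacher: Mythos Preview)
Your proof is correct and follows essentially the same approach as the paper's: both argue by case analysis on the rooted topology of $T$ restricted to $\{x,y',y'',z\}$ with $z$ forced to the outgroup position. The paper does this tersely by pointing to a figure enumerating the three fully resolved rooted trees, whereas you derive the same enumeration explicitly from the distribution of $x,y',y''$ among the child-subtrees of $u=\lca_T(x,y',y'')$; your version is more self-contained but not a different idea.
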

\begin{proof}
  Since $z$ is an outgroup by assumption, there are only three possible
  fully resolved rooted tree with $L=\{x,y',y'',z\}$, see
  Fig.~\ref{fig:wtf}.  Each of these trees corresponds to a unique
  quadruple (annotated at the top). The relationship between $\lca(x,y')$
  and $\lca(x,y'')$ is determined by the tree topology. The statement
  follows by inspecting the three cases. If $\bar T[x,y',y'',z]$ is not
  fully resolved, no quartet is defined on $\{x,y',y'',z\}$, i.e., $\bar T$
  is the star tree and thus $\lca(x,y')=\lca(x,y'')=\lca(y',y'')$.
\end{proof}

\begin{fact}
  If $u'=\lca(x,y')$ and $v'=\lca(x,y'')$ for $x,y',y''\in L$, then $u'$
  and $v'$ are comparable w.r.t.\ $\preceq$ in $T$.
  \label{fact:lca}
\end{fact}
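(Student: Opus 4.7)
The plan is to use the basic fact that, in a rooted (planted) tree, the set of ancestors of any fixed vertex forms a totally ordered chain with respect to $\preceq$. Since both $u'$ and $v'$ are last common ancestors involving $x$, both must be ancestors of $x$, and therefore both lie on the unique path from $x$ up to the planted root $0_T$. Any two vertices on such a path are comparable under $\preceq$, which is exactly the claim.

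In more detail, first I would observe that by the definition of $\lca$ we have $x \preceq \lca(x,y') = u'$ and $x \preceq \lca(x,y'') = v'$; in particular, both $u'$ and $v'$ belong to the set $\{w \in V(T) \mid x \preceq w\}$ of ancestors of $x$. Next, I would argue that in a planted phylogenetic tree this ancestor set is a chain in $(V(T), \preceq)$: it consists precisely of the vertices visited on the unique path from $x$ to $0_T$, ordered by their position along that path. Consequently, any two of its elements are comparable, so in particular $u' \preceq v'$ or $v' \preceq u'$.

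There is essentially no obstacle here; the statement is a direct structural consequence of the fact that $T$ is a tree (so paths to the root are unique) together with the way $\preceq$ was defined on $V(T)$ earlier in the paper. The only thing worth being careful about is to invoke the correct definition of $\preceq$ from the ``From Quartets to rooted triples'' subsection, where $x \preceq y$ is declared to hold exactly when $y$ lies on the path from $x$ to $0_T$; from that definition the chain property of ancestor sets is immediate, and the fact follows in one line.
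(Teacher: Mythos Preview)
Your argument is correct and is exactly the reasoning the paper has in mind: the statement is labeled an \emph{Observation} and is given without proof, precisely because both $u'$ and $v'$ are ancestors of $x$ and the ancestors of any leaf form a chain under $\preceq$. There is nothing to add.
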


Lemma~\ref{lem:ineq} together with Obs.\ \ref{fact:lca} implies that
quadruples with known outgroups can be used to identify best matches. More
precisely, in order to determine the set $\{y\in L[s] \mid x\bmr y\}$ it
suffices to consider leaf sets $\{x,y',y'',z\}$ with $y',y''\in L[s]$ such
that $z$ is an outgroup for $\{x,y',y''\}$. By Lemma~\ref{lem:ineq}, any
set of this type implies an (in)equality between $\lca(x,y')$ and
$\lca(x,y'')$. It may not be necessary to consider all quadruples. To
explore ways to reduce the computational efforts, let us assume that for
given $x\in L$ and $s\in S$, $s\ne \sigma(x)$, we can identify sets
$Y\subseteq L[s]$ and $Z\subseteq L$ such that the following three
assumptions are satisfied: 
\begin{description}
\item[\AX{(A0)}] The noise in the data is small enough so that for any four
  taxa $\{x,y',y'',z\}$ with $y',y''\in Y$ and $z\in Z$ one of the three
  possible quartets or the star topology is inferred correctly.
\item[\AX{(A1)}] The candidate set $Y\subseteq L[s]$ contains all best
  matches of $x$ in species $s$ (but usually also additional leaves).
\item[\AX{(A2)}] $Z$ is a non-empty set outgroups for $Y\cup\{x\}$.
\end{description}

Before we proceed, let us consider these three assumptions in some more
detail.  \AX{(A0)} is satisfied by construction for additive distance data.
In real-life applications it is often possible to obtain at least a very
good approximation using explicit models of sequence evolution. In
addition, several computational approaches have been proposed to estimate
the quartet relation directly from sequence data. It is also worth noting
that \AX{(A0)} does not require precise distance data, it only asks for
correct categorical data on the quartet relation.

Condition \AX{(A1)} can always be enforced by setting $Y=L[s]$. We make
this assumption explicit because in practice it will be desirable to work
with small subsets $Y\subseteq L[s]$ as using $L[s]$ may be too expensive
for large gene families. The inclusion of very distant relatives may be
problematic for the construction of good multiple sequence alignments and
thus the extraction of the quartet relation. Furthermore, it may be
difficult to find suitable outgroup data in this case. Thus we will limit
$Y$ to a manageable size and sufficient sequence similarity. In
\texttt{ProteinOrtho} \cite{Lechner:11a}, for example, $Y\subseteq L[s]$ is
defined as the set sequence with \texttt{blast} bit-scores exceeding a
certain fraction of the best hit for $x$ in species $s$.

Condition \AX{(A2)}, i.e., the knowledge of appropriate outgroups, is the
only problematic assumption. As discussed above, distance-based methods by
construction do not convey information on the root of the phylogenetic tree
$T$ but only determine its unrooted version $\unrooted{T}$. As a
consequence, additional information, not contained in the pairwise distance
measurements, is necessary to determine the edge in $\unrooted{T}$ that
harbors the position of the root $\rho$ of $T$ \cite{Penny:76}. In general,
$Z$ will be chosen from one or more species that are outgroups to
$\sigma(x)$ and $s$ in $S$. Even if outgroup species are given, gene
duplications may pre-date the divergence of the available species set, so
that a given data set will usually violate \AX{(A2)} for some pairs of
leaves. We will return to these issues in more detail in the following
sections.

\begin{algorithm}
  \caption{Overall Workflow}
  \label{alg:overall}
  \begin{algorithmic}[1]
    \REQUIRE reference vertex $x$
    \STATE retrieve a sufficient set $Y\subseteq L[s]$ of candidate
           best matches for $x$ with color $s$
    \STATE determine a set $Z$ of outgroup vertices for $Y\cup\{x\}$
    \STATE initialize an edgeless digraph $\Gamma$ with vertex set $Y$
    \FORALL{pairs $y',y''\in Y$} \label{alg-start1}
       \FORALL{$z\in Z$}
          \STATE determine significantly supported quartet on
              $\{x,y',y'',z\}$
       \ENDFOR
       \STATE determine consensus quartet over all choices of $z\in Z$
       \label{alg-end1}
       \IF {consensus quartet implies $\lca(x,y_1)\preceq\lca(x,y_2)$}
          \STATE insert the directed edge $(y_2,y_1)$ into $\Gamma$
       \ENDIF  
    \ENDFOR
    \STATE compute the strongly connected components of $\Gamma$
    \STATE report strongly connected components without out-edges as
       the set of best matches $\{y\in Y| x\bmr y\}$
  \end{algorithmic}
\end{algorithm}

The discussion so far suggests to use the quadruple relation for sets of
the form $\{x,y',y',z\}$ with $y',y''\in Y$ and $z\in Z$ to determine the
best matches of $x$ in the species containing the homolog set $Y$. The
procedure is summarized in Alg.\ \ref{alg:overall}. The main result of this
section establishes its correctness.

\begin{theorem}
  Algorithm~\ref{alg:overall} correctly identifies the set of best matches
  of $x$ with color $s$ as the unique strongly connected component of
  $\Gamma$ without out-edges provided assumptions \AX{(A0)}, \AX{(A1)}, and
  \AX{(A2)} are satisfied.
\label{thm:erewan}
\end{theorem}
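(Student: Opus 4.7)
The plan is to establish, in order, that (i) the edges of $\Gamma$ faithfully encode the $\preceq$-relation between the last-common-ancestor values $\lca(x,y)$ as $y$ ranges over $Y$, (ii) the strongly connected components of $\Gamma$ are precisely the equivalence classes of this relation, and (iii) the unique sink component coincides with $\{y\in L[s]\mid x\bmr y\}$.

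For step~(i) I would fix an arbitrary pair $y',y''\in Y$ and an arbitrary $z\in Z$. Assumption \AX{(A2)} makes $z$ an outgroup for $\{x,y',y''\}$, and \AX{(A0)} guarantees that the quartet or star topology on $\{x,y',y'',z\}$ is inferred correctly from the data. Lemma~\ref{lem:ineq} then translates this topology into exactly one of $\lca(x,y')\prec\lca(x,y'')$, $\lca(x,y')\succ\lca(x,y'')$, or $\lca(x,y')=\lca(x,y'')$. Crucially, this trichotomy is a property of $T$ alone and does not depend on the chosen outgroup, so all $z\in Z$ induce the same relation; the consensus step in lines~\ref{alg-start1}--\ref{alg-end1} therefore returns the true quartet, and the edge $(y_2,y_1)$ is placed in $\Gamma$ precisely when $\lca(x,y_1)\preceq\lca(x,y_2)$.

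For step~(ii), Observation~\ref{fact:lca} ensures that $\lca(x,y')$ and $\lca(x,y'')$ are always $\preceq$-comparable, so the relation $y'\sim y''\iff\lca(x,y')=\lca(x,y'')$ partitions $Y$ into classes that are linearly ordered by the strict $\prec$-relation between their $\lca$-values. Within a class both directed edges are inserted by step~(i), while between classes all edges point from the class with the larger to the class with the smaller $\lca$-value. This identifies the $\sim$-classes with the strongly connected components of $\Gamma$ and shows that the condensation of $\Gamma$ is a linear chain, hence possesses a unique sink component $C_\ast$. For step~(iii), $C_\ast$ consists of those $y\in Y$ minimising $\lca(x,y)$, which by Definition~\ref{def:genclose} together with \AX{(A1)} (guaranteeing that every minimiser over $L[s]$ already lies in $Y$) coincides with $\{y\in L[s]\mid x\bmr y\}$.

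I do not anticipate a genuine obstacle: the argument is essentially bookkeeping once Lemma~\ref{lem:ineq} and Observation~\ref{fact:lca} are in hand. The most delicate point is justifying that the consensus over $z\in Z$ is unambiguous, which reduces to the observation that the inferred relation between $\lca(x,y')$ and $\lca(x,y'')$ is a function of the gene tree only; every admissible outgroup must therefore report the same answer, so no tie-breaking in the consensus is actually required.
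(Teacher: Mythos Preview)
Your proposal is correct and follows essentially the same route as the paper: use \AX{(A0)}--\AX{(A2)} together with Lemma~\ref{lem:ineq} to show that the edges of $\Gamma$ encode the true $\preceq$-relation between the $\lca(x,y)$, then invoke Observation~\ref{fact:lca} to conclude that these values are totally ordered, so the strongly connected components are cliques forming a linear chain whose unique sink is the set of minimisers. Your write-up is actually a bit more careful than the paper's in making explicit why the consensus over $z\in Z$ is unambiguous, but the underlying argument is the same.
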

\begin{proof}
  Assumptions \AX{(A1)} and \AX{(A2)} imply that comparison of the last
  common ancestors can be performed in terms of the quartets according to
  Lemma~\ref{lem:ineq}, which by assumption \AX{(A0)} are all inferred
  correctly. Therefore, lines \ref{alg-start1}-\ref{alg-end1} compute all
  quartets correctly, and thus the inequality between $\lca(x,y_1)$ and
  $\lca(x,y_2)$ is inferred correctly. The auxiliary graphs $\Gamma$
  therefore contains at least one arc between any two vertices
  $y',y''\in Y$ and both the arc $(y',y'')$ and $(y'',y')$ if and only if
  $\lca(x,y')=\lca(x,y'')$, i.e., the strongly connected components are
  cliques. Since the $\lca(x,y)$ are interior vertices of $T$ that are
  totally ordered along the path from $x$ to the root of $T$ (Observation
  \ref{fact:lca}), there is a unique strongly connected component $B$ in
  $\Gamma$ that has no out-edges, whose vertices are those $y\in B$ for
  which $\lca(x,y)$ is minimal. Thus $B$ is the set of best matches of $x$
  with color $s$.
\end{proof}
Algorithm~\ref{alg:overall} therefore works correctly at least under
idealized assumptions. It also serves as a heuristic in cases where one of
the assumptions (usually \AX{(A2)}) is violated.

\subsection{Identification of outgroups}

In many practical applications, the phylogenetic relationships between the
\emph{species} under consideration are known. We therefore investigate here
to what extent knowledge of the species tree $S$ can help to identify good
outgroup sets $Z$.  Ideally, the genes chosen as outgroups $Z$ are
co-orthologs of the focal gene set $Y$, i.e., the duplication event that
produced $y'$ and $y''$ occured after the speciation event that separates
$\sigma(z)$ for all $z\in Z$ from $\sigma(X)$ and $\sigma(Y)$. As we
shall see, it is not possible to identify outgroups with complete
certainty. It is possible, however, to identify incorrect choices in many
situations.

In the following we consider three species $\sigma(X)$, $\sigma(Y)$, and
$\sigma(z)$ for $z\in Z$ such that
\begin{equation}
  \lca_S(\sigma(X),\sigma(Y)) \prec_S
  \lca_S(\sigma(X),\sigma(Y),\sigma(z)),
\end{equation}
i.e., $\sigma(z)$ is an outgroup in the species tree for $\sigma(X)$ and
$\sigma(Y)$. Problematic cases in which quartets are interpreted
incorrectly may appear whenever the duplication event $\lca_T(y',y'')$
separating two paralogs $y',y''\in Y$ pre-dates the speciation event
separating $\sigma(z)$ from $\lca_S(\sigma(X),\sigma(Y))$. We capture this
situation in
\begin{definition}
  Let $u$ be an inner node of the species tree $S$, let $y',y''\in Y$ be
  paralogs in a species $\sigma(Y)\in L(S(u))$. Then $\lca_T(y',y'')$ is an
  \emph{ancient duplication relative to $u\in V(S)$} for the reconciliation
  map $\mu: V(T)\to V(S)\cup E(T)$ if $u \prec_S \mu(\lca_T(y',y''))$.
\end{definition}
Clearly, if $\lca_T(y',y'')$ is an ancient duplication relative to
$\lca_S(\sigma(X),\sigma(Y),\sigma(z))$, then genes in $z\in Z$ are bad
choices as outgroups $\{x,y',y'',z\}$.  The difficulty is that we do not
know the reconciliation map $\mu$ in our setting. In some cases, however,
it is possible to identify vertices in $T$ that are ancient duplications
relative to some speciation for \emph{any} reconciliation. Such cases can
then be avoided.

\begin{figure}
  \begin{center}
    \includegraphics[width=0.6\columnwidth]{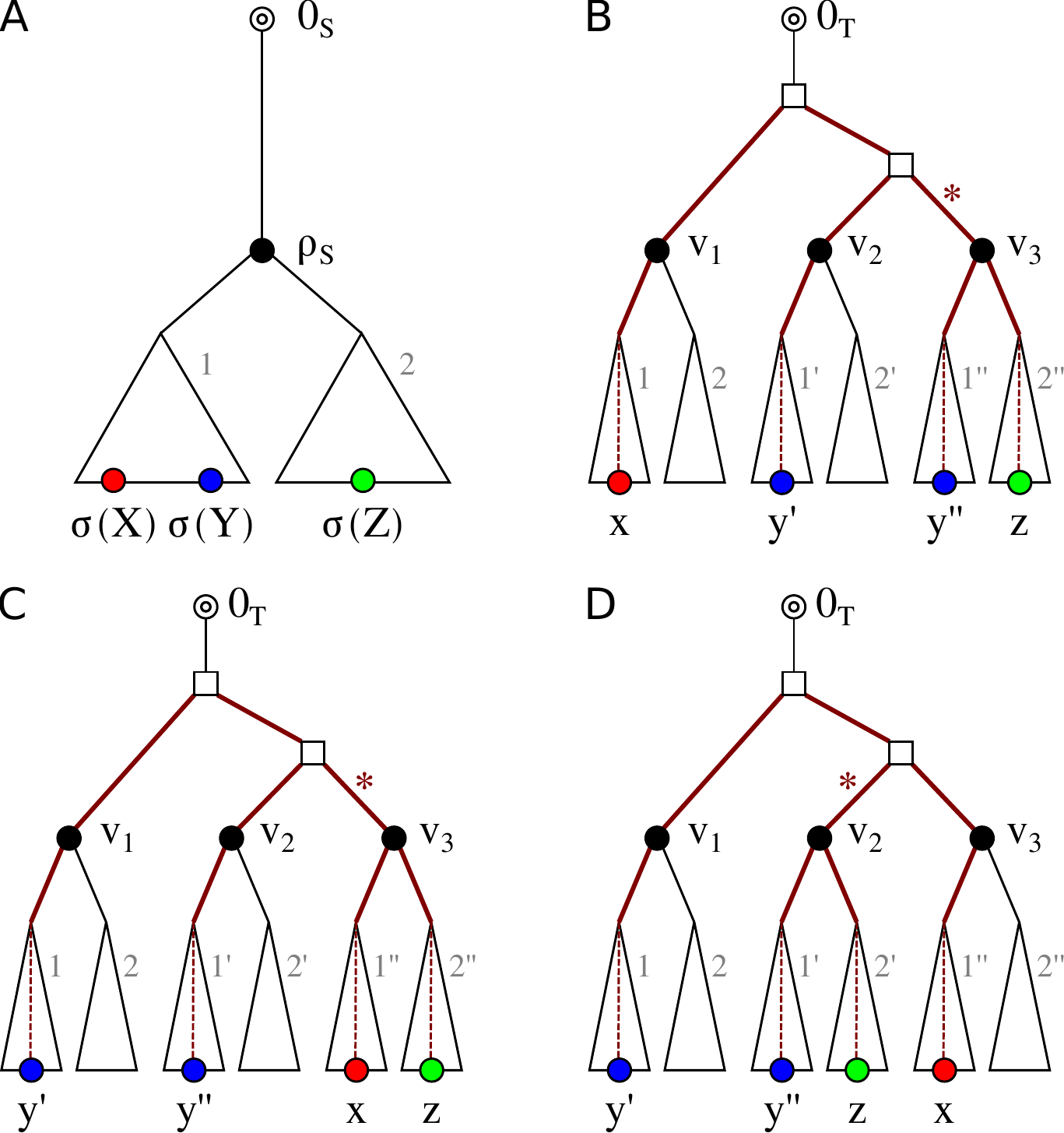}
  \end{center}
  \caption{Minimal examples in which ancient duplications lead to false
    positives (FP) or false negatives (FN) when choosing outgroups as
    described in the text. (A) The species tree $S$ displaying the triple
    ($\sigma(X)\sigma(Y)|\sigma(Z)$).  (B-D) Gene trees $T$ with two
    ancient duplications. We assume that $y'$ and $y''$ are the only extant
    genes of color $\sigma(Y)$, i.e. color $\sigma(Y)$ is extinct in the
    subtree of $x$ in each of the shown cases. The asterisk marks the
    discriminating edge for the quartet inference.  (B) A quartet
    $(xy'|y''z)$ is inferred so that only $y'$ but not $y''$ is a best
    match, $(x, y'')$ is a FN. (C) A quartet $(xz|y'y'')$ is inferred so
    that $y'$ is a false best match, $(x, y')$ is a FP. (D) A quartet
    $(xy'|y''z)$ is inferred so that $(x, y')$ is a FP and $(x, y'')$ is a
    FN.}
  \label{fig:ancient-dupl}
\end{figure}

Before we investigate possibilities to identify some ancient duplications
in distance data, we prove a rather technical result that shows that in
cases without too many ancient duplications, Algorithm \ref{alg:overall}
produces correct results. For the proof we will need to consider the
reconciliation map $\mu$ for \emph{complete} gene family histories, i.e.,
gene trees $T$ containing extant genes as well as all branches leading to
loss events. As above, we do not consider HGT. The leaf set of $T$ is thus
$L:= L_e \cupdot L_0$, where $L_e$ represents the extant genes and $L_0$
denotes loss events. Since the species map is naturally restricted to
extant genes (i.e., $\sigma: L_e(T)\to L(S)$), we need to restrict
\AX{(R1)}: If $x\in L_e(T)$, then $\mu(x)=\sigma(x)$. We will refer to such
gene trees and reconciliation maps as \emph{extended} gene trees and
\emph{extended} reconciliation maps, respectively. Correspondingly, Lemma
\ref{lem:cobmg} only holds for $L_e$, i.e., we can conclude that
$\sigma(L_e(T(w_1)))\cap \sigma(L_e(T(w_2)))=\emptyset$.  This can easily
be seen by reusing the contradiction argument in \cite{cobmg}[Lemma 2].  As
a consequence of loss events we now may have $\sigma(L_e(T(v)))=\emptyset$
for some nodes $v\in V(T)$.

\begin{lemma}
  Let $(T,\sigma)$ be an extended gene tree with a non-empty set of extant
  genes $L_e=X\cupdot Y\cupdot Z$ with $|\sigma(Z)|=1$, let $S$ be a
  species tree on $S=\{\sigma(X)$, $\sigma(Y)$, $\sigma(Z)\}$ such that
  $\lca_S(\sigma(X),\sigma(Y))\prec
  \lca_S(\sigma(X),\sigma(Y),\sigma(Z))=\rho_S$, and let $\mu$ be an
  extended reconciliation map from $(T,\sigma)$ to $S$. If \AX{(A0)} holds
  and $|\mu^{-1}(\rho_S)|\le 2$, then Algorithm \ref{alg:overall}, using
  $Y$ as the candidate best match set and $Z$ as outgroup set, correctly
  determines, for every gene $x\in X$, all best matches in species
  $\sigma(Y)$.
  \label{lem:genes_in_rhoS}
\end{lemma}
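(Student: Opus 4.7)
My plan is to reduce the statement to (the proof template of) Theorem~\ref{thm:erewan} by showing that, under $|\mu^{-1}(\rho_S)|\le 2$, every quartet $\unrooted{T}[x,y',y'',z]$ computed by Algorithm~\ref{alg:overall} correctly reflects the comparison of $\lca_T(x,y')$ and $\lca_T(x,y'')$, for every pair $y',y''\in Y$ and every $z\in Z$. Assumption \AX{(A1)} is immediate because $Y$ is the complete set of extant $\sigma(Y)$-genes; the structural bound $|\mu^{-1}(\rho_S)|\le 2$ takes the place of the (possibly failing) outgroup assumption \AX{(A2)}.

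Fix $y',y''\in Y$ and set $u:=\lca_T(x,y',y'')$. By \AX{(R2)} we have $\mu(u)\succeq_S \lca_S(\sigma(X),\sigma(Y))$, so $\mu(u)$ is either strictly below $\rho_S$ in $S$, equal to $\rho_S$, or on the edge $(0_S,\rho_S)$. If $\mu(u)$ is strictly below $\rho_S$, then $T(u)$ contains no $\sigma(Z)$-leaf, so $\lca_T(\{x,y',y'',z\})\succ u$ for every $z\in Z$ and Lemma~\ref{lem:ineq} gives the correct quartet. The case $\mu(u)=\rho_S$ is excluded: \AX{(R3.ii)} forces the children of a $\rho_S$-speciation to have pairwise incomparable images strictly below $\rho_S$, and the extended version of Lemma~\ref{lem:cobmg} forces their extant-species sets to be pairwise disjoint; since $\sigma(X)$ and $\sigma(Y)$ both lie on the $\lca_S(\sigma(X),\sigma(Y))$-side of $\rho_S$, the three leaves $x,y',y''$ would then be confined to a single child of $u$, contradicting $u=\lca_T(x,y',y'')$.

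The critical case is therefore $\mu(u)$ on the edge $(0_S,\rho_S)$, where $u$ is an ancient duplication. If $T(u)$ contains no $\sigma(Z)$-leaf, then $z\notin T(u)$ for every $z\in Z$ and we are done as above. Otherwise, I use $|\mu^{-1}(\rho_S)|\le 2$ to argue that $T(u)$ harbours at most two ``ancient lineages'', each headed by a vertex of $\mu^{-1}(\rho_S)\cap T(u)$, and that every extant leaf of $T(u)$ belongs to exactly one of them (relying on the fact that, in the extended reconciliation framework with explicit loss events, every lineage leading to an extant gene crosses $\rho_S$ through a speciation). Depending on how $\{x,y',y''\}$ distributes across these (one or two) lineages, rooted at $u^*$ and possibly $u^{**}$, the argument splits. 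If $x,y',y''$ share a single lineage, then any $z\in Z\cap T(u)$ is separated from $\{x,y',y''\}$ either by the heading $\rho_S$-speciation of that lineage (when $z$ is on the $\sigma(Z)$-side of the same lineage) or by the duplication $u$ itself (when $z$ lies in the other lineage), so $z$ is again a strict outgroup. If instead $\{x,y',y''\}$ is split between the two lineages, then up to the symmetry $y'\leftrightarrow y''$ there are two essentially distinct splittings: ``$x$ alone'' versus ``$x$ together with one of $y',y''$''. For each splitting and each placement of $z$ in one lineage or the other, I would compute the six pairwise $\lca_T$'s in $\{x,y',y'',z\}$, identify the minimal ones, and read off the restricted four-leaf unrooted tree; a short check shows that the quartet is $(xz|y'y'')$ precisely when $\lca_T(x,y')=\lca_T(x,y'')$ (the ``$x$ alone'' splitting) and $(xy'|y''z)$ or its $y'\leftrightarrow y''$ counterpart otherwise—regardless of the internal duplications on each lineage's $\sigma(X)\sigma(Y)$-side, because the separating edge of the four-leaf tree always sits at one of the interfaces $u^*\leftrightarrow u^{**}$ or $u^*\leftrightarrow$ its $\sigma(Z)$-side.

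Having verified that each inferred quartet is correct, the consensus over $z\in Z$ is correct, the arcs of $\Gamma$ capture exactly $\lca_T(x,y_1)\preceq\lca_T(x,y_2)\iff (y_2,y_1)\in E(\Gamma)$, and the SCC argument that concludes the proof of Theorem~\ref{thm:erewan} identifies the best-match set as the unique SCC without out-edges. The main obstacle I foresee is the last splitting regime: one must verify that with only two ancient lineages the restricted tree really does have the claimed topology in every admissible placement of $z$. The hypothesis is sharp at this point—a third $\rho_S$-speciation would permit a nested configuration as in Fig.~\ref{fig:ancient-dupl}, in which the cherry of the four-leaf tree pairs $z$ with $y'$ or $y''$ and thereby corrupts the inferred quartet.
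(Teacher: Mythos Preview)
Your decomposition—casing on where $\mu(\lca_T(x,y',y''))$ sits relative to $\rho_S$—differs from the paper's, which splits globally on $|\mu^{-1}(\rho_S)|\in\{1,2\}$ and then, in the two-preimage case, on whether $\sigma(Y)$ occurs in the subtree $T_1=T(v_1)$ containing $x$. Both routes ultimately verify that every quartet fed to the algorithm correctly encodes the comparison of $\lca_T(x,y')$ and $\lca_T(x,y'')$; your more local viewpoint is a reasonable alternative and makes the role of each quartet more transparent.

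There is, however, a genuine gap in your exclusion of the case $\mu(u)=\rho_S$. From \AX{(R3.ii)} and the extended Lemma~\ref{lem:cobmg} you correctly conclude that the children of $u$ have pairwise incomparable images strictly below $\rho_S$ and pairwise disjoint extant-species sets. But this does \emph{not} confine $x,y',y''$ to a single child: nothing so far forbids a multifurcation at $u$ with one child $c_1$ mapped into the $\sigma(X)$-branch (containing $x$), another child $c_2$ mapped into the $\sigma(Y)$-branch (containing $y',y''$), and a third child on the $\sigma(Z)$-side. The images are pairwise incomparable and the extant species sets $\{\sigma(X)\}$, $\{\sigma(Y)\}$, $\{\sigma(Z)\}$ are disjoint, so both conditions you cite are satisfied while $u=\lca_T(x,y',y'')$. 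What actually rules this configuration out is axiom \AX{(R4)}, which the paper invokes explicitly at the analogous point (its case~(b)): with an extant $\sigma(Z)$-leaf $z^*$ in the third child one has $\lca_T(x,y')=\lca_T(x,z^*)=u\in V^0(S)$ but $\lca_S(\sigma(X),\sigma(Y))\ne\lca_S(\sigma(X),\sigma(Z))$, contradicting \AX{(R4)}. The same omission resurfaces in your ``split'' analysis, where you tacitly assume that each $\rho_S$-speciation $u^*$ has a \emph{unique} child on the $\{\sigma(X),\sigma(Y)\}$-side, so that $x$ and $y'$ lying in the same lineage are automatically below the same child of $u^*$; this too requires \AX{(R4)}. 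Once you insert the \AX{(R4)} arguments at these two places, your case analysis goes through.
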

\begin{proof}
  First note that the statement is trivial if there exists only one gene in
  $Y$. Hence, we can assume that $Y$ contains more than one gene.
  Moreover, Condition \AX{(A1)} is trivially satisfied since, by
  assumption, the candidate set of best matches of $x$ in $Y$ is exactly
  $Y$. Since $L_e$ is non-empty, we have to consider the two cases
  $|\mu^{-1}(\rho_S)|=1$ and $|\mu^{-1}(\rho_S)|=2$.

  Assume first $|\mu^{-1}(\rho_S)|=1$, i.e., there exists exactly one
  $v\in V(T)$ such that $\mu(v)=\rho_S$. We then have
  $\sigma(L_e(T(w_1)))\cap \sigma(L_e(T(w_2)))=\emptyset$ for any distinct
  $w_1,w_2\in\child_T(v)$ (cf.\ \cite{cobmg}[Lemma 2], Lemma
  \ref{lem:cobmg}), which, by construction of the species tree $S$,
  immediately implies
  $\sigma(L_e(T(w)))\in \{\{\sigma(X),\sigma(Y)\},\{\sigma(Z)\}\}$ for any
  $w\in\child_T(v)$. Hence, $Z$ is an outgroup set for $Y\cup \{x\}$, i.e.,
  Condition \AX{(A2)} is satisfied, and the statement thus follows directly
  from Theorem \ref{thm:erewan}.

  Now suppose $|\mu^{-1}(\rho_S)|=2$, i.e., there are exactly two distinct
  $v_1,v_2\in V(T)$ with $\mu(v_1)=\mu(v_2)=\rho_S$. Let
  $T_1\coloneqq T(v_1)$ and $T_2\coloneqq T(v_2)$ be the subtrees of $T$
  rooted at $v_1$ and $v_2$, resp., and assume w.l.o.g.\ $x\in
  L_e(T_1)$. Note that $L_e(T_1)\cup L_e(T_2)=L_e$.  Let
  $w_1\in \child_T(v_1)$ such that $x\preceq_T w_1 \prec_T v_1$. If $w_1$
  were mapped to an edge or vertex along the path from $\rho_S$ to
  $\sigma(Z)$, then
  $\lca_S(\sigma(X),\sigma(Y))\prec
  \lca_S(\sigma(X),\sigma(Y),\sigma(Z))=\rho_S$ would imply
  $\sigma(X)\not\preceq_S \mu(w_1)$; a contradiction to \AX{(R2)}.  Thus,
  $\sigma(Z)\notin \sigma(L_e(T(w_1)))$.  Since $\mu(v_1)\in V^0(S)$,
  Condition \AX{(R3.i)} implies that there exists $w_2\in\child_T(v_1)$,
  $w_2\neq w_1$, such that $\mu(v_1)=\lca(\mu(w_1),\mu(w_2))$.  Since
  $\sigma(L_e(T(w_1)))\cap \sigma(L_e(T(w_2)))=\emptyset$ by Lemma
  \ref{lem:cobmg}, we obtain
  $\sigma(L_e(T(w_1)))\subseteq\{\sigma(X),\sigma(Y)\}$ and
  $\sigma(L_e(T(w_2)))\subseteq\{\sigma(Z)\}$.  We distinguish the two
  cases (a) $\sigma(Y)\notin \sigma(L_e(T_1))$ and (b)
  $\sigma(Y)\in \sigma(L_e(T_1))$.

  \noindent\textit{Case (a):} If $\sigma(Y)\notin \sigma(L_e(T_1))$, any
  leaf $y\in Y$ must reside within a subtree $T(w')$ with
  $w'\in\child_T(v_2)$, thus all genes in $Y$ are best matches of
  $x$. Since the speciation node $v_2$ separates $\sigma(Z)$ from
  $\sigma(X)$ and $\sigma(Y)$, we have $\sigma(Z)\notin \sigma(L_e(T(w')))$
  for any such $w'$ (cf.\ Lemma \ref{lem:cobmg}). Moreover, reusing the
  same arguments as for $v_1$, we conclude that there exists exactly one
  such $w'\in\child_T(v_2)$ such that
  $\sigma(Y)\in\sigma(L_e(T(w')))$. Hence, any two distinct $y,y'\in Y$
  reside within the same subtree $T(w')$ and thus
  $\lca_T(x,y)=\lca_T(x,y')$.  Since $\sigma(Z)\notin \sigma(L_e(T(w')))$,
  this immediately implies $\unrooted{T}[x,y,y',z]=(xz|yy')$ for any
  $z\in Z$. Hence, $\Gamma$ is the complete graph, i.e., any gene of
  species $\sigma(Y)$ is correctly inferred as a best match of $x$.
  
  \noindent\textit{Case (b):} Assume, for contradiction, that there exists
  $w_3\in\child_T(v_1)\setminus\{w_1\}$ such that
  $\sigma(Y)\in\sigma(L_e(T(w_3)))$. Clearly, $w_3\neq w_2$. Since it must
  hold $\sigma(L_e(T(w_1)))\cap\sigma(L_e(T(w_3)))=\emptyset$ as well as
  $\sigma(L_e(T(w_2)))\cap\sigma(L_e(T(w_3)))=\emptyset$ by Lemma
  \ref{lem:cobmg}, we conclude $\sigma(L_e(T(w_1)))=\{\sigma(X)\}$ and
  $\sigma(L_e(T(w_3)))=\{\sigma(Y)\}$. However, \AX{(R4)} then implies
  $\lca_S(\sigma(X),\sigma(Y))=\lca_S(\sigma(Y),\sigma(Z))$; a
  contradiction. Hence, there exists an extant gene $y\preceq_T w_1$ in
  $Y$. Then, as $\sigma(Z)\notin \sigma(L_e(T(w_1)))$, any $z\in Z$ infers
  the same quartet on $\{x,y,y',z\}$, $y'\in Y\setminus \{y\}$. We
  therefore conclude that the auxiliary graph $\Gamma$ contains a unique
  strongly connected component without out-edges, which represents the set
  of best matches of $x$ in $Y$. Note that in these cases Condition
  \AX{(A2)} is not necessarily satisfied, but Algorithm \ref{alg:overall}
  still provides the exact solution.
\end{proof}
The condition $|\mu^{-1}(\rho_S)|\le 2$ makes an explicit assumption on the
true history of the gene family by limiting the scenario to at most one
ancient duplication on $X\cupdot Y\cupdot Z$. Fig.\ \ref{fig:ancient-dupl}
shows that this condition cannot be dropped: if there are two or more
ancient duplications affecting $X$, $Y$, and $Z$, then the correct
inference of best matches from quartets can no longer be guaranteed.  It is
important to note that the condition $|\mu^{-1}(\rho_S)|\le 2$ cannot be
checked in real data since $\mu$ is unknown. In the simulated data,
however, it is easy to validate and we observed empirically that it is
rarely violated in our data (see Simulation Results section).

In some situations ancient duplications can be inferred unambiguously,
independent of the reconciliation map $\mu$. This is in particular the case
if there are incongruences between quartets of genes and species.  Consider
four genes $a,b,c,d$ residing in four pairwise distinct species
$\sigma(a)$, $\sigma(b)$, $\sigma(c)$, and $\sigma(d)$, and assume that
these four species form the quartet
$(\sigma(a)\sigma(b)|\sigma(c)\sigma(d))$.  Then we say that the gene and
species quartets are \emph{congruent} if $\unrooted{T}[a,b,c,d]=(ab|cd)$ or
$\times$. Otherwise, i.e., for
$\unrooted{T}[a,b,c,d]\in\{ (ac|bd), (ad|bc)\}$, we say they are
\emph{incongruent}, see Fig.\ \ref{fig:ancient}.  In the following we show
that the incogruence of gene and species quartets implies ancient
duplications. More precisely:

\begin{figure}
  \begin{center}
    \includegraphics[width=0.6\columnwidth]{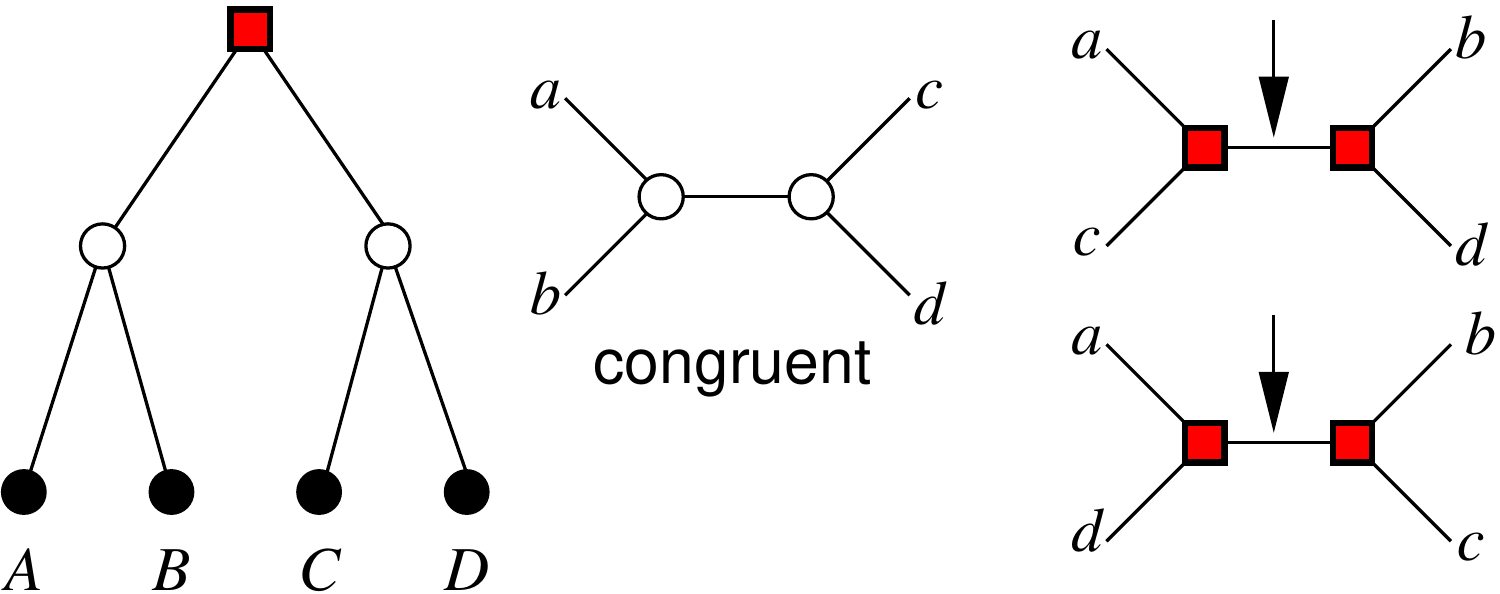}
  \end{center}
  \caption{Incongruence of gene and species quadruples implies the
    existence of an ancient duplication. Consider four pairwise distinct
    species $A$, $B$, $C$, and $D$ whose species tree is given on the
    l.h.s., and let four genes $a$, $b$, $c$, and $d$ be chosen such that
    $\sigma(a)=A$, $\sigma(b)=B$, $\sigma(c)=C$, and $\sigma(d)=D$. The two
    speciation events separating $A$ from $B$ and $C$ from $D$ are
    indicated by $\bigcirc$. The root of this tree is indicated by
    $\blacksquare$. Of the three possible gene quartets, one is congruent
    with the species tree. The other two are incongruent. In each of these,
    Equ.\ (\ref{eq:lca}) implies that the two interior vertices in these
    quartets cannot be mapped to the species tree below the root. The root
    of the gene tree must thus be mapped above the root of the species
    tree.  }
  \label{fig:ancient}
\end{figure}

\begin{theorem}
  Let $(T,\sigma)$ and $S$ be gene and species trees, respectively, and
  $a,b,c,d\in L(T)$. Moreover, let $\sigma(a)$, $\sigma(b)$, $\sigma(c)$,
  and $\sigma(d)$ be pairwise distinct species, set
  $u:= \lca_S(\sigma(a),\sigma(b),$ $\sigma(c),\sigma(d))$,
  $v_1:= \lca_S(\sigma(a),\sigma(b))$, and
  $v_2:= \lca_S(\sigma(c),\sigma(d))$. If $v_1\prec_S u$, $v_2\prec_S u$
  and $\unrooted{T}[a,b,c,d]=(ac|bd)$ or $\unrooted{T}[a,b,c,d]=(ad|bc)$,
  then $u\prec_S \mu(\lca_T(a,b,c,d))$ for every reconciliation map
  $\mu:V(T)\to V(S)\cup E(S)$ without HGT events. In particular,
  $\lca_T(a,b,c,d)$ is a duplication event.
  \label{thm:ancient}
\end{theorem}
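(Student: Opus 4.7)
The plan is to reduce the theorem to proving the strict inequality $u \prec_S \mu(\lca_T(a,b,c,d))$; the ``in particular'' clause then follows immediately from Corollary~\ref{cor:dupli} applied to $A = \{a,b,c,d\}$. Set $r := \lca_T(a,b,c,d)$. By~(\ref{eq:lca}) we already have $\mu(r) \succeq_S u$, so it suffices to rule out $\mu(r) = u$. I would assume this equality for contradiction. Since $u \in V^0(S)$, this forces $r$ to be a speciation, and so by \AX{(R3.ii)} the images of the children of $r$ in $T$ under $\mu$ are pairwise incomparable in $S$.

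The heart of the argument is a translation of the quartet hypothesis into a combinatorial statement about the children of $r$. I claim that the assumption $\unrooted{T}[a,b,c,d] \in \{(ac|bd),(ad|bc)\}$ forces the existence of a child $w$ of $r$ in $T$ with $L(T(w)) \cap \{a,b\} \neq \emptyset$ and $L(T(w)) \cap \{c,d\} \neq \emptyset$. Otherwise, every child subtree of $r$ meets $\{a,b,c,d\}$ either inside $\{a,b\}$ or inside $\{c,d\}$ but not both; a short case analysis on whether each of the pairs $\{a,b\}$ and $\{c,d\}$ is kept in a single child subtree or split between two children then shows that the induced unrooted topology $\unrooted{T}[a,b,c,d]$ must be $(ab|cd)$ or the star $\bm{\times}$, contradicting our hypothesis.

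Given such a child $w$, the hypotheses $v_1 \prec_S u$ and $v_2 \prec_S u$ ensure that $\lca_S(\sigma(x'),\sigma(y')) = u$ for every $x' \in \{a,b\}$ and $y' \in \{c,d\}$, so $\lca_S(\sigma(L(T(w)))) \succeq_S u$. Combining this with $\mu(w) \succeq_S \lca_S(\sigma(L(T(w))))$ from~(\ref{eq:lca}) and $\mu(w) \preceq_S \mu(r) = u$ from \AX{(R2)} forces $\mu(w) = u$. Because $r = \lca_T(a,b,c,d)$, not all of $\{a,b,c,d\}$ can lie in $T(w)$, so $r$ has another child $w' \neq w$; then $\mu(w') \preceq_S u = \mu(w)$ makes $\mu(w)$ and $\mu(w')$ comparable, contradicting \AX{(R3.ii)}. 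Hence $\mu(r) \neq u$ and therefore $\mu(r) \succ_S u$, which is what we needed.

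The main obstacle lies in the second paragraph: carefully verifying the combinatorial equivalence between ``the partition of $\{a,b,c,d\}$ among children of $r$ separates $\{a,b\}$ from $\{c,d\}$'' and ``$\unrooted{T}[a,b,c,d] \in \{(ab|cd),\bm{\times}\}$''. This is a finite case check on four patterns (each side kept together or split between two children, with the root being suppressed or not when it has exactly two induced children), but it is the only step in the proof that is not an immediate consequence of the reconciliation axioms; once it is settled, the contradiction via \AX{(R3.ii)} is automatic and the duplication conclusion follows from Corollary~\ref{cor:dupli}.
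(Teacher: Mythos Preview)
Your proof is correct but follows a somewhat different route from the paper's. The paper exploits the specific incongruent quartet directly: assuming $(ac|bd)$, it observes that $\lca_S(\sigma(a),\sigma(c))=\lca_S(\sigma(b),\sigma(d))=u$, so by~(\ref{eq:lca}) both $\lca_T(a,c)$ and $\lca_T(b,d)$ are mapped to $\succeq_S u$; these are precisely the two inner vertices of the quartet, hence at least one of them lies strictly below $r$ in $T$, and the observation that a strict $T$-descendant of a speciation cannot be mapped to the same vertex of $S$ (a one-line consequence of \AX{(R3)}) gives the contradiction without any case analysis on the children of $r$. Your argument instead locates a child $w$ of $r$ whose subtree mixes $\{a,b\}$ with $\{c,d\}$ and derives the contradiction via \AX{(R3.ii)} applied to $w$ and a sibling. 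What the paper's route buys is brevity: it bypasses your combinatorial case check entirely by naming the offending vertex ($\lca_T(a,c)$, say) outright. What your route buys is explicitness: the role of the children of $r$ and of the incomparability axiom is spelled out, and the argument does not rely on identifying the inner vertices of the restricted quartet with vertices of $T$. For the final clause the paper invokes Lemma~\ref{lem:dupli} rather than Corollary~\ref{cor:dupli}; either suffices.
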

\begin{proof}
  By assumption, $S[\sigma(a),\sigma(b),\sigma(c),\sigma(d)]$ has the
  topology shown in Fig.~\ref{fig:ancient}.  Assuming $(ac|bd)$, Equ.\
  (\ref{eq:lca}) implies
  $\mu(\lca_T(a,c))\succeq \lca_S(\sigma(a),\sigma(c))=u$ and
  $\mu(\lca_T(b,d))\succeq \lca_S(\sigma(a),\sigma(c))=u$. Thus both inner
  nodes $p$ and $q$ of the quartet are mapped no lower than $u$. The edge
  between them therefore must be mapped to an edge pre-dating $u$, since
  the speciation constraint \AX{(R3)} implies that two $\prec_T$-comparable
  events in $T$ of which one is a speciation cannot by mapped to the same
  vertex of $S$. Thus $u \prec_S \mu(\lca_T(a,b,c,d))$.  The case $(ad|bc)$
  is handled by an analogous argument exchanging $c$ and $d$. The fact that
  $\lca_T(a,b,c,d)$ is a duplication event now follows from Lemma
  \ref{lem:dupli}.
\end{proof}
This theorem can be used to discard suspicious outgroups: If
$\unrooted{T}[x,y,z_1,z_2]$ is incongruent with the known species tree,
then $\sigma(z_1)\ne\sigma(z_2)$ should be replaced by outgroup candidates
from earlier-branching species. The downside of using Theorem
\ref{thm:ancient} is that it requires the systematic investigation of a
possibly large numbers of quartets.

We suspect that it is possible in most cases to unambiguously identify
pairs whose last common ancestor in the gene tree pre-dates the last common
ancestor of the species tree under consideration. While it may be difficult
to determine the relative order of such duplications, we suspect that
clustering methods used to extract groups of co-orthologs (COGs) can be
adapted to disentangle such ancient ``paralog groups''.

\section{Simulation Results} 

Well curated gene family histories are not available at large scale. We
therefore use simulated data to evaluate how well best matches (in the
sense of evolutionary relatedness) can be estimated from both perfect
and noisy evolutionary distance measurements. For this purpose, it is
important to have data sets that emphasize asymmetric rate variations among
paralogs, i.e., the situations in which sequence dissimilarities and
divergence times are not well correlated. We therefore developed a
simulation system (see Methods) that can produce this type of test data and
explicitly records the gene family history.
  
\begin{figure}
  \begin{center}
    \includegraphics[width=0.75\columnwidth]{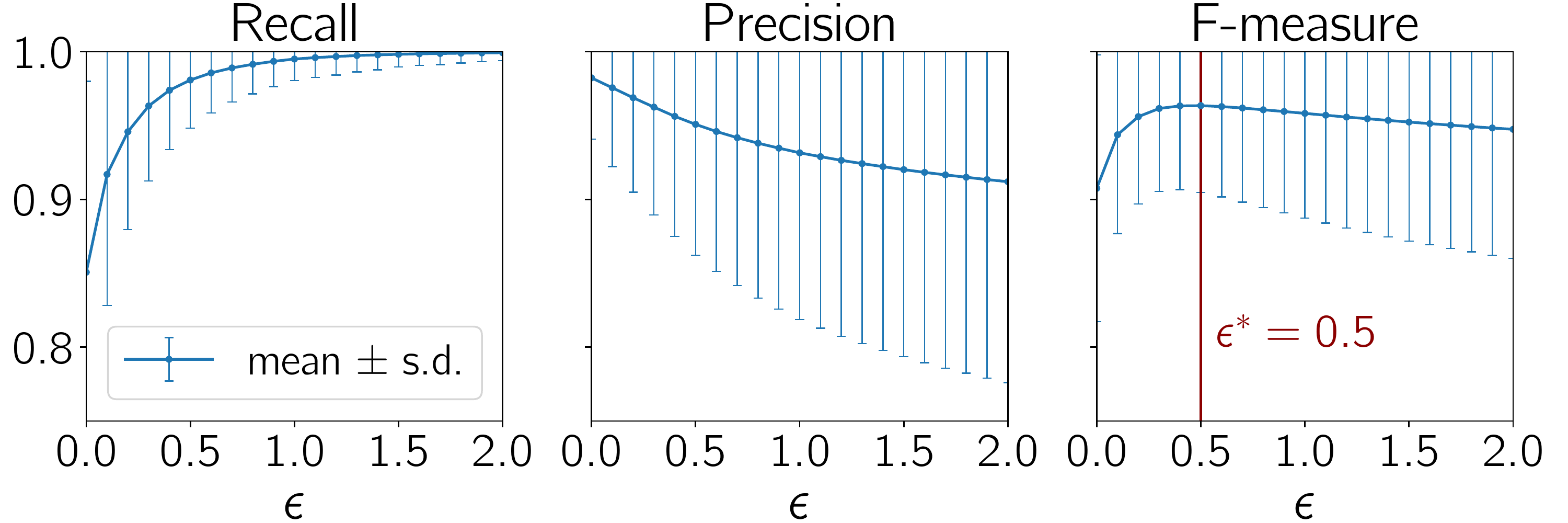}
  \end{center}
  \caption{Recall, precision, and F-measure of the true best matches as a
    function of $\epsilon$ for simulated data (2000 scenarios).}
  \label{fig:epsilon}
\end{figure}

We compare three strategies to estimate best matches:
\par\noindent\textbf{1.\ Reciprocal best hits} are inferred directly
from the distance data. In order to account for rate variations among
paralogs, we follow the strategy of \texttt{ProteinOrtho}
\cite{Lechner:11a} and consider nearly co-optimal best hits by considering
for a given gene $x$ in species $\sigma(x)$ all those $y\in Y$ as almost best
hits that have distance not worse than a factor
$1+\epsilon$ than the most similar gene in $Y$.  In symbols:\\
$H(Y|x):= \{ y\in Y\mid d(x,y)\le (1+\epsilon)\min_{y'\in Y} d(x,y') \}$\\
For further comparison we then chose the value of $\epsilon^*$ that
maximizes the F-measure ($\epsilon^*=0.5$, see Fig.\ \ref{fig:epsilon}).
Still this approach produces a substantial number of both false positives
and false negatives in data sets with substantial rate variations. We
expect that the optimal value of $\epsilon^*$ will depend on the details of
the data set, in particular on the extent of evolution rate asymmetries. In
general these will have to be estimated from the gene family history.  We
refer to this approach as the ``$\epsilon$-method''. Since we chose the
cut-off $\epsilon^*$ to maximize the $F$-measure, we effectively determine
an upper bound on the performance of the Best Hit approach.

\par\noindent\textbf{2.\ Explicit reconstruction of $\unrooted{T}$.} 
Since the additive distances completely determine $\unrooted{T}$, the only
source of errors for perfect data is an incorrect position of the root of
$T$. For additive distance data, the Neighbor Joining algorithm
\cite{Saitou:87} is guaranteed to produce the correct $\unrooted{T}$
\cite{Atteson:99}. We then use midpoint rooting \cite{Hess:07} to pass from
$\unrooted{T}$ to $T$ and compute the best matches in $T$. We refer to this
method as ``NJ+midpoint rooting''. This method is not intended as a viable
means of analysis for real-life sequence data. It serves, however, as
  a convenient way to assess the effects of rate imbalances because it
  isolates the errors that are introduced by the choice of the root alone
  i.e., by rate imbalances.

\begin{figure}
  \begin{center}
    \includegraphics[width=0.75\columnwidth]{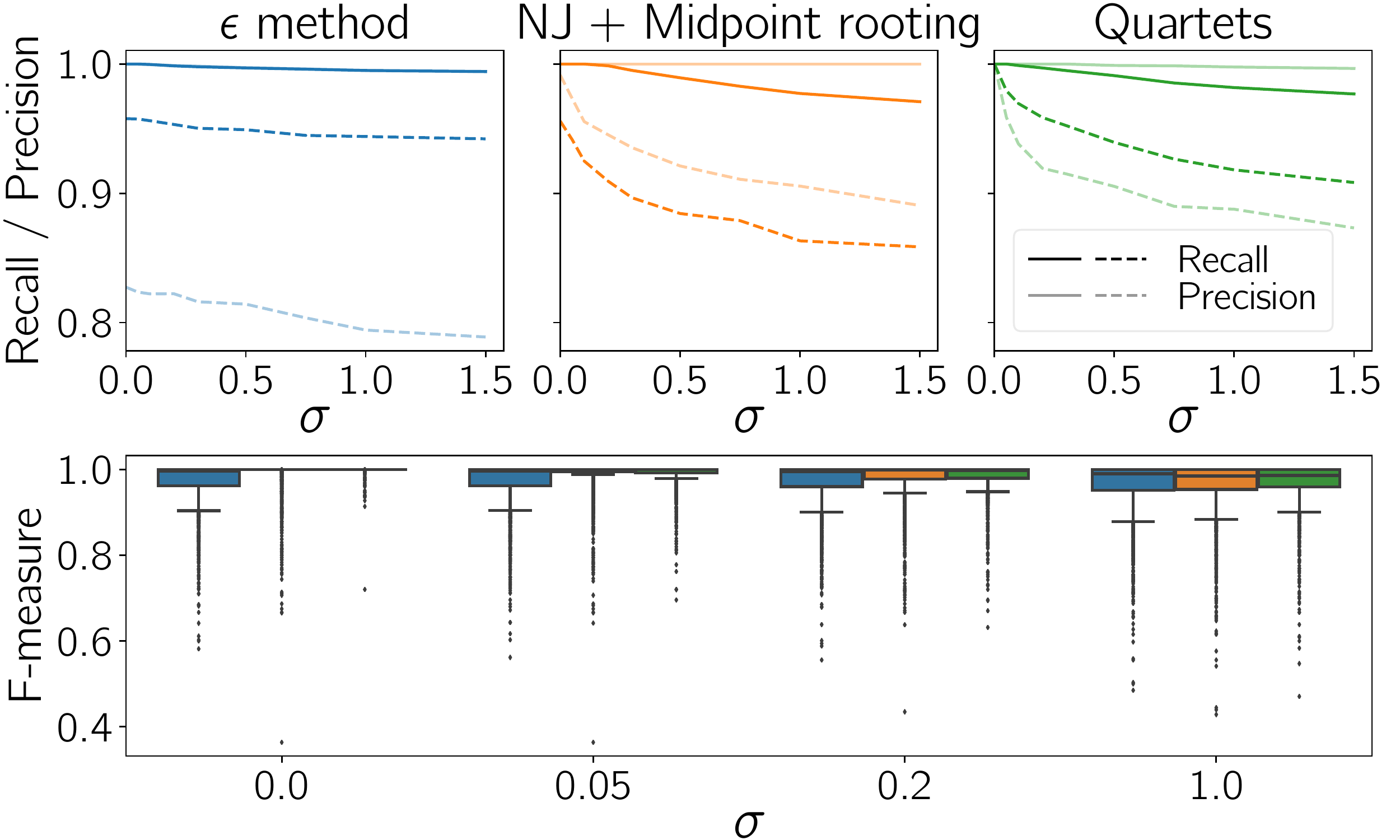}
  \end{center}
  \caption{Performance comparison of the best match inference methods for
    simulated data (2000 scenarios). Top panel: Median (solid) and
    10\textsuperscript{th} percentile (dashed) of recall and precision as a
    function of noise level. Lower panel: Boxplots of F-measure for
    different levels of noise superimposed on the additive distance;
    $\sigma=0$ refers to perfect data.  Blue: $\epsilon$ method, orange:
    explicit construction of the unrooted tree $\unrooted{T}$ and midpoint
    rooting, green: inference of quartets with outgroups chosen in another
    branch of the root.}
  \label{fig:methodcomp}
\end{figure}

\par\noindent\textbf{3.} 
The \textbf{``\emph{Quartet}'' approach} starts from a known species tree
$S$.  For $x\in X$, and $y',y''\in Y$ we pick the outgroup $z$ from a
species $\sigma(z)$ such that
$\lca_S(\sigma(X),\sigma(Y))\prec\lca_S(\sigma(X),\sigma(Y),\sigma(z))$ and
then use Algorithm \ref{alg:overall}. We explored two different ways of
determining the quartet relation: (a) After transforming the measured
sequences distances into an approximately additive evolutionary distance
(see Methods for details), we used Equ. (\ref{eq:quartet}), and (b) for
sequence data we used the direct method described in the Methods section,
thus bypassing the computation of distances altogether.

In order to benchmark the inference of best matches we compute recall and
precision w.r.t.\ the true best matches restricted to pairs of gene sets
$X$ and $Y$ for which such outgroups are available. The comparison in Fig.\
\ref{fig:methodcomp} (bottom panel) shows that the quartet method
outperforms the alternatives for different levels of the simulated
measurement error. The results are robust over a wide range of
  simulated measurement error. Not surprisingly, the reconstruction of
Neighborjoining trees already provides better results than the
$\epsilon$-method. The simple midpoint rooting strategy however still
incurs noticable level of error. For the quartet method operating on
noiseless data the only source of errors are bad choices of outgroups,
which are the consequence of ancient duplications.  The number of ancient
duplication exceeds $1$ in $5.15\%$ of the simulated gene family
scenarios. Due to loss events predating the root of the species tree, the
condition in Lemma \ref{lem:genes_in_rhoS} is only violated in $3.7\%$ of
the gene trees. Out of these problematic cases, little more than half
($2.25\%$) actually result in a non-perfect inference accuracy.

Moreover, we investigated how the number of duplication and losses
influences the inference of false positives inferred by QM (see Fig.~\ref{fig:seqmethcomp}). As expected, the number of false positives
increases with increasing duplication and loss events.

Restricting the choice of outgroup genes $z$ to species that are
  separated from $X$ and $Y$ by the root of the species tree, i.e., such
  that $\lca_S(\sigma(z),\lca_S(\sigma(X),\sigma(Y)))=\rho_S$, is likely to
  be problematic whenever $S$ is skewed in a way that leaves very few choices
  for $\sigma(z)$ and whenever the divergence between $\sigma(z)$ and
  $\lca_S(\sigma(X),\sigma(Y))$ is large. In the latter case, saturation
effects may impair the quartet inference in fast evolving gene
families. Hence, it would be advantageous to consider also genes from
closer species. In principle, every relative outgroup w.r.t.\ species
  $\sigma(X)$ and $\sigma(Y)$ is a viable candidate. These can then be
  filtered by applying Theorem \ref{thm:ancient} to reduce the number of
bad choices of $z$. We find that filtering for outgroups with identifiable
ancient duplications and giving preferences to the closest outgroup
  genes, i.e., those with the lowest
  $\lca_S(\sigma(z),\lca_S(\sigma(X),\sigma(Y)))$ indeed yields a further
moderate improvement of the estimated best matches (see Fig.\
\ref{fig:outgroup-methods}). However, the performance is slightly reduced
for perfectly additive data due to ancient duplications that where not
detected by the currently available filtering heuristics.

\begin{figure}[t]
			\begin{center}
				\includegraphics[width=0.3\textwidth]{./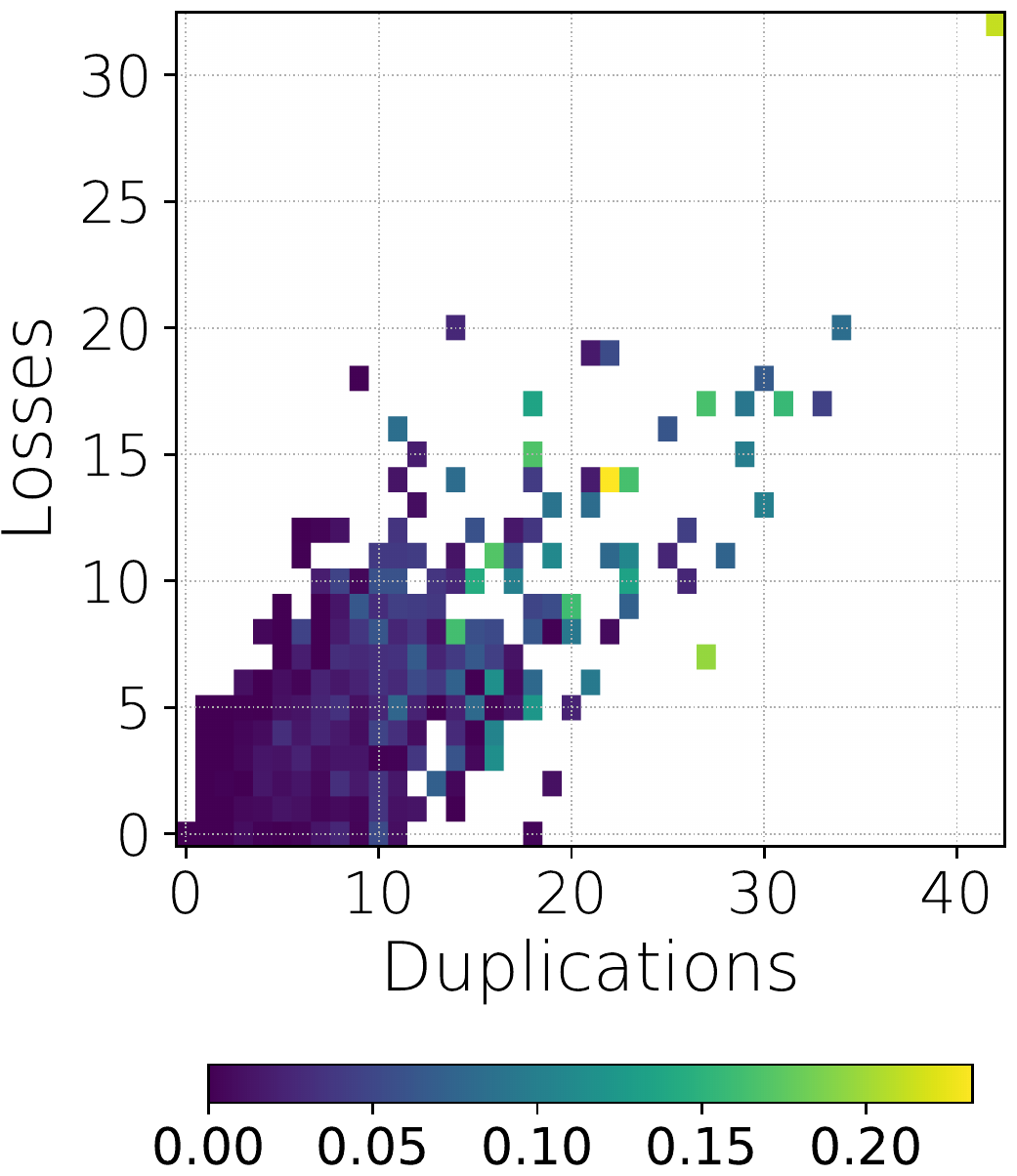}
			\end{center}
			\caption{Inference of best matches from simulated sequence data.  Heat
				map of the fraction of false positive best matches inferred by QM
				as a function of the number of duplication and loss events in the
				simulated scenario. The false positive rate is computed relative
				to the number of true best matches.}
			\label{fig:seqmethcomp}
\end{figure}

\begin{figure}[t]
  \begin{center}
    \includegraphics[width=0.75\columnwidth]{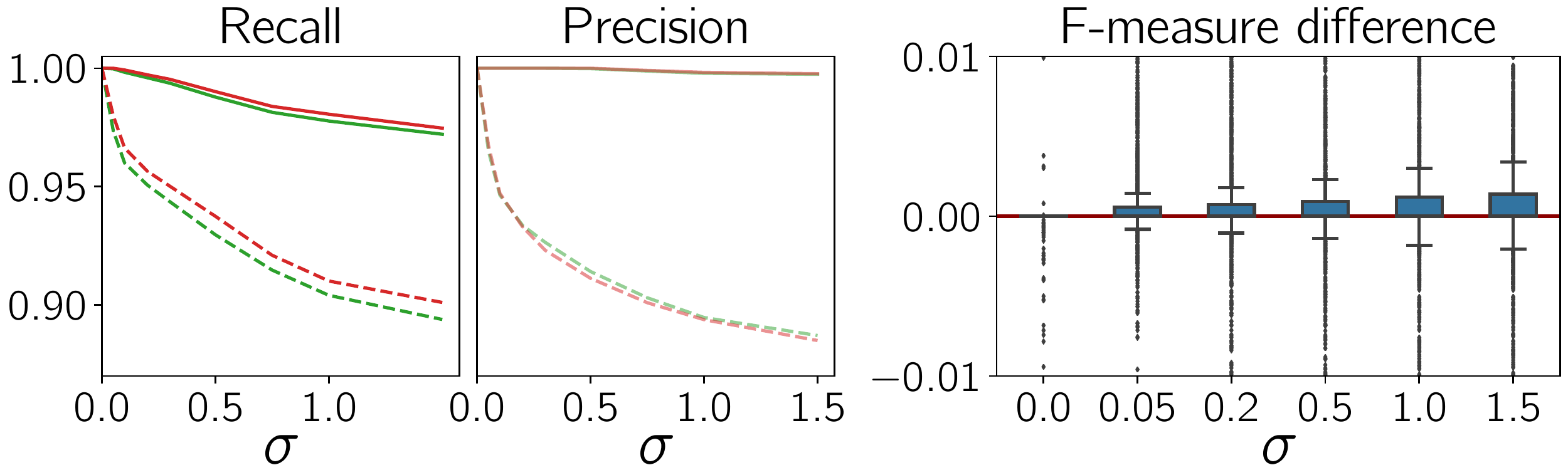}
  \end{center}
  \caption{Performance comparison of two different outgroup choice methods:
    (1) outgroups chosen randomly from species in another branch of the
    root (green, same as in Fig.\ \ref{fig:methodcomp}), (2) closest
    outgroups in all relative outgroup species corrected with Theorem
    \ref{thm:ancient} (red). Left and middle: Median (solid) and
    10\textsuperscript{th} percentile (dashed) of recall and precision as a
    function of noise level. Right: Boxplots of the F-measure differences
    for selected noise levels (method 2 $-$ method 1).}
  \label{fig:outgroup-methods}
\end{figure}

\begin{figure*}
  \begin{center}
    \includegraphics[width=\textwidth]{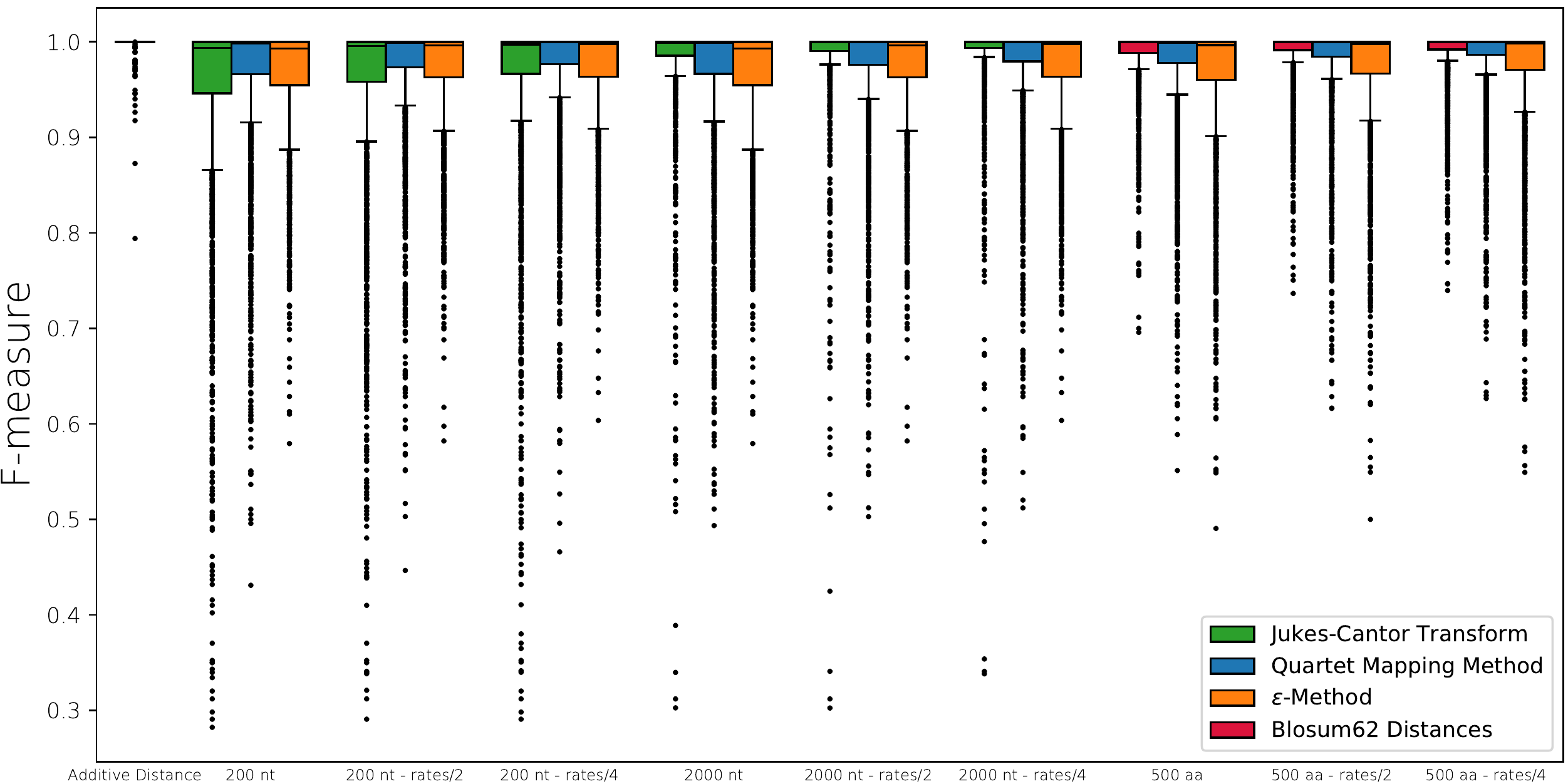}
    \end{center}
    \caption{Comparison of best match estimates for short nucleic acid
      sequences ($200$ nt), long nucleic acid sequences ($2000$ nt), and
      amino acid sequences ($500$ aa). As expected the F-measure improves
      when saturation is decreased by down-scaling the total number of
      events, and with increasing sequence length. The disappointing
      performance of the QM method is probably the consequence of a
      majority voting procedure that is too simple-minded. So far, the best
      results are obtained by estimating additive distances from pairwise
      sequence alignments using the Jukes-Cantor tranform (for nucleic acid
      sequences) or a BLOSUM-based transformation (for aminoacid
      sequences). }
    \label{fig:perf}
\end{figure*}

In applications to real-life data sets, additional uncertainties arise
through the reconstruction of distances from sequences. We therefore
simulated sequences with and without in/dels from the gene tree/species
tree scenarios and inferred the best matches from the sequence data. We
compared the results obtained (a) by transforming the Hamming distances
using the Jukes-Cantor transform and (b) by direct inference of the
quadruples using quartet mapping as outlined in the Methods section.

Fig.~\ref{fig:perf} summarized the results for simulated nucleic acid and
aminoacid sequences of different lengths and different scaling of the
evolutionary rates.  As expected, the short sequences incur a relative
large noise level compared to the perfect additive distances. Nevertheless,
the overwhelming majority of best matches is still estimated correctly
($F$-measures well above $0.9$ for the vast majority of scenarios even for
nucleic acid sequences as short as $200$ nt). Larger false positive rates
are observed only in a small number of scenarios with many duplications and
losses. This is not surprising since our relatively simple rule for
outgroup choice tends to fail if there are many ancient duplications. As
expected, the F-measure improves with increasing sequence length due to the
increased amount of information from which the distances are estimated. The
same trends were observed for simulated protein sequences

The performance of the Quartet Mapping strongly depends on how those
  quartets are handled for which none of the three possible splits
  dominates. In the default setting, these are interpreted as unresolved
  tree ($\times$) and inserted as bi-directional edges into the auxiliary
  graph $\Gamma$. This, however, leads to a moderate overprediction of best
  matches. Alternatively, a consensus can be taken over multiple choices of
  the outgroup $z$. Finally, the unresolved quadruples can be omitted
  altogether in the construction of the auxiliary graph $\Gamma$. Both
  alternatives perform worse than the default method, see Additional File
  1.

\section{Discussion and Conclusions}

The idea to use quartet structures for improvement of orthology
  estimates is not new; it was used e.g.\ in \texttt{QuartetS}
  \cite{Yu:11}. Quartets are also used as witnesses of non-orthology in OMA
  \cite{Train:17} to avoid some types of false-positives. Here, we
  systematically investigate how and when quartets help to improve and/or
correct empirical best-hit data to identify best matches in the sense of
closest evolutionary relatives. We propose that reciprocal best
  matches, rather than the uncorrected reciprocal best hits, should then be
  used to infer orthology relationships. This second step is the topic of
  an independent manuscript \cite{cobmg} in which the mathematical
  connections between (reciprocal) best matches and orthology are
  elucidated in detail.

The key observation of the present contribution is that the best
  matches of a gene $x$ in the set $Y$ of genes from a different species
  can be computed correctly if for every $y',y''\in Y$ one can find a gene
  $z$ from a third species that is an outgroup for $\{x,y',y''\}$.  From a
  theoretical point of view, this condition is closely related to rooting the
  gene tree. The second necessary ingredient is an estimate of an additive
  evolutionary distance between the genes that is accurate enough to
  correctly identify the topology of a certain subset of quadruples. We
  emphasize that this is a much less stringent condition compared to the
  ability of reconstructing the complete gene tree $T$.

Empirically, we observe that (partial) knowledge of the species tree (more
precisely: reliable monophyletic groups) is very useful for the
  choice of outgroup genes $z$: excellent results are obtained by choosing
  a candidate $z$ from a species that is an outgroup for $\sigma(x)$ and
  $\sigma(Y)$. The results can be further improved by using filtering
  criteria that identify ancient duplication events and by computing a
  consensus over serveral choices of $z$. In data sets with little
measurement noise, we indeed obtain nearly perfect best match
estimates. The theoretical considerations outlined here also suggest
additional in-roads for further improvements by means of identifying
ancient duplications, which not only serve as ``witnesses of
non-orthology'' but can also be used to prune the set of candidate
outgroups. 

In order to make the methods described here applicable to very large
  real-life data sets, it will be necessary to optimize the computational performance. To this
  end, we will develop heuristic rules to prune the set $Y$ in the case of
  large gene families. An obvious candidate is to use the $\epsilon$-method
  as an initial filter, where $\epsilon$ is now chosen to optimize the
  tradeoff between $|Y|$ and false negative predictions of best matches. We
  expect that the heuristic rules for choosing the set $Z$ of candidate
  outgroups can also be improved substantially. We expect that methods for
  orthology assessment can be improved in both reliability and
  computational performance by combining the accurate estimation of best
  matches described here with a better understanding of (reciprocal) best
  match graphs \cite{Geiss:18x,rbmg-19} and their connection with the
  orthology relation \cite{cobmg}. Since tree-free methods for orthology
  detection rely on (pairwise) best hits as proxy for reciprocal best
  matches, we expect that the accuracy of most tools would improve if best
  matches are supplied as input data. This is not easy to test, however,
  since the best hit computation is usually an integral part of the the
  software. Such a benchmark study is hence beyond the scope of this
  contribution.

The work reported here is primarily intended to provide a solid theoretical
foundation for the construction of improved best match heuristics. The
theoretical results give some guarantees for obtaining the correct best
matches and highlight some limitations that cannot be overcome with
certainty as long as only distance data are available.

\section{Methods}

\subsection{Simulations of \emph{dated} Species Trees}

As in previous work \cite{cobmg}, we use the Innovation Model
\cite{Keller:2012} to produce realistic topologies for the planted species
tree $S$. We then construct a dating function $\tau:V(S)\to[0,1]$ such that
$\tau(0_S)=1$ and $\tau(x)=0$ for $x\in L(S)$.  In order to assign a date
to an interior vertex, we traverse $S$ top-down, more precisely for the
current node $u$ at time $\tau(u)$ we proceed as follows:\\
(1) We pick a child $v\in\child(u)$ and a leaf $x\in L(S(v))$ in the
subtree below $v$. If $v$ is already a leaf, we set $\tau(v)=0$ and
proceed to the next child of $u$.\\
(2) Otherwise, we determine the number $k$ of speciations on the path
between $v$ and $x$. Hence, the path from $u$ to $x$ comprises $k+2$
edges.\\
(3) We pick a random number $r$ with mean $1$ and range $(0,2)$ from a
uniform distribution and set $\tau(v) = \tau(u) (1-r/(k+2))$ This rule is
chosen so that the expected time elapsed along the edge $uv$ equals
$\tau(u)$ divided by the number $(k+2)$ of edges along the path to
the root and ensures that $\tau(v)>0$.\\
The result is a dated species tree in which each edge $uv$ has length
$\tau(u)-\tau(v)$.

\subsection{Simulation of gene trees in the dated species tree}

We use the Gillespie algorithm \cite{Gillespie:77} to simulate the
duplication, loss and horizontal gene transfer events (HGT) occurring in
$S$. The branches of the species tree $S$ are independent in
Duplication/Loss scenarios. However, horizontal gene transfer introduces
dependencies between them. We therefore have to simulate the evolution
process in such a way that at each time point $\tau$ the possible reactions
are given by the Cartesian product $G(\tau)\times\{D,L,H\}$, where
$g\in G(\tau)$ is a gene that is present at time $\tau$ in any one of the
branches of the dates species tree, and $q\in\{D,L,H\}$ is one of the three
possible events (Duplication, Loss, HGT). Every possible simulation event
$\xi:=(g,q)$ is associated with a rate $r_{\xi}(\tau)$ that may depend
explicitly on the point in time. Rate constants are
described below.

In each step, two random numbers $r_1$ and $r_2$ are drawn independently
from the uniform distribution on $[0,1]$. The first random number $r_1$ is
used to select $\xi$ with probability $r_{\xi}(\tau)/R(\tau)$, where
$R(\tau)$ is the sum of the rates of all reactions available at time
$\tau$.  We refer to \cite{Gillespie:77} for a convenient way to implement
the rate-proportional choice of the ``reaction channel''.
Depending on the selected event type, the following actions are performed:\\
($q=L$) Gene loss is modeled by removing $g$ from the list of active
genes. \\
($q=D$) Gene duplications are modeled by placing a copy $g'$ of $g$
into the same branch of $S$ at time $\tau$.\\
($q=H$) For HGT the copy of $g'$ is placed into a different branch of
$S$. The ``landing site'' for the HGT copy is chosen uniformly from the
branches of $S$ available at time $\tau$ with the exception of the branch
harboring the parental gene $g$.

The rules determining the rate parameters for gene copies $g'$ and the
optional adjustment of rates for the genes $g$ are discussed below. The
second random variable $r_2$ is used to update the clock according to
$\tau\leftarrow \tau-\Delta\tau$ with $\Delta\tau=\ln(1/r_2)/R$. The
simulation terminates as soon as $\tau-\Delta\tau\le0$.

A complication arises from the fact that the time interval
$[\tau,\tau-\Delta\tau]$ may contain a speciation event at time
$\tau_s$. At a speciation, the gene content is copied into the
daughter-lineages, and the rates are modified in a lineage-specific manner.
As a consequence, the waiting time $\Delta\tau$ has to be re-estimated
since the set of reaction channels has changed. More precisely, we need to
determine the distribution of waiting times from a time point $t_0$ until
the next event conditioned on the fact that no event occurred between $t_0$
and $t_1$, where $t_1$ designates the time point of the speciation. For the
complementary cumulative distribution function and $s \coloneqq t_0-t_1$ we
have
\begin{align*}
  \mathbb{P}(T \ge s+t |  T \ge s) & =
  \mathbb{P}(T \ge s+t \wedge T \ge s)/\mathbb{P}(T\ge s) \\
   & = \mathbb{P}(T \ge s + t)/\mathbb{P}(T \ge s)
\end{align*}
Since the waiting time distributions are exponential with rate $r_1$
before $t_1$ and rate $r_2$ following the speciation event, we obtain
\begin{equation*}
  \mathbb{P}(T \ge s+t |  T \ge s) = e^{-(r_1 s + r_2 t)}/e^{-r_1 s} =
  e^{-r_2t}  
\end{equation*}
Hence, if the simulated waiting time reaches beyond the speciation event,
the clock is advanced to the speciation event and a new waiting time is
drawn with the rates after the speciation event. In practice, a new random
number to obtain the time step $\Delta\tau'$ with the updated rates after
the speciation event. In the new interval $[\tau_S,\tau_S-\Delta\tau']$ we
again have to check for speciation events. Since the speciation events are
known \emph{a priori} from the dated species tree $S$, they are held in a
priority queue in temporal order. The final result is a dated gene tree
$T$, i.e., each event is unambiguously associated with a time stamp. The
simulation also completely determines the reconciliation map $\mu$.

We simulated 2000 pairs of species and gene trees, where $|L(S)|$ was drawn
uniformly from the interval $[3,50]$. The duplication and loss rates were
(independently) drawn from $[0.5,1.0)$.

\subsection{Modeling rate imbalances}

In order to produce realistic (sequence) data, an evolution rate $\omega_e$
has to be associated with each edge $e$ of $T$. To this end we use a
hierarchical model that first determines a baseline gene substitution rate
$\omega^0_e$ for each edge $e$ of the species tree $S$ in order to simulate
effects such as variations of population size and generation time. This
introduces a correlation between the rates of all genes in the same lineage
of $S$. These base rates are then modified by gene-specific contributions
that capture effects such as differences in selection pressures that depend
on gene function and rate differences in the wake of duplications such as
neofunctionalization and subfunctionalization \cite{Force:99}. In detail,
we use the following parametrization:\\
$\bullet$  mean substitution rate of the conserved members of a gene family
  (default $1.0$).\\
$\bullet$ variance $\sigma_0^2$ for the baseline substitution rate in $S$
  (default $0.2$).\\
$\bullet$ a gamma distribution for the substitution rates $>1$ of divergent
  genes.  The parameters are estimated from data for the whole genome
  duplication in saccharomycete yeasts \cite{Byrne:07}. Alternatively,
  a uniform distribution on $(1; r_{max}]$ can be selected.\\
$\bullet$ weights for the relative frequency of the possible fates of
  duplicates (functional conservation, subfunctionalization,
  neofunctionalization; default equal weights $1/3$).

We determine the baseline substitution rates $\omega_{uv}^0$ for the edge
$uv \in E(S)$ as follows: We simply assign the mean substitution rate to
the planted edge $0_S\rho_S$ (i.e. 1.0 by default).  We traverse $S$ in
pre-order and draw for each edge $uv \in E(S) \setminus \{0_S\rho_S\}$ the
logarithm $\ln \omega_{uv}^0$ of the rate of evolution from a normal
distribution with variance $\sigma^2 = \sigma_0^2(\tau(u)-\tau(v))$.  To
avoid bias towards higher or lower rates, we normalize the mean of the
normal distribution such that $E(\omega_{uv}^0 )= \omega_{\parent(u)u}^0$.

For the gene specific rates we first sort all vertices $u \in V(T)$ by
$\tau(u)$ in descending temporal order. We keep track of the current number
of genes in each branch of the species tree. During the simulation, the
edges of $T$ will be marked as either \texttt{conserved} or
\texttt{divergent} depending on the fate of the branch after a duplication
event. For each edge $e=uv \in E(T)$ in the gene tree, we initialize an
empty list $\mathfrak{L}_e$ of ordered pairs of the form $(\tau,\omega)$ to
record the gene-specific evolution rates $\omega$ and the corresponding
time points $\tau$ at which they become valid during the existence of
$e$. This allows us to reset the \texttt{divergent} status of a gene in
case it is the last survivor in a given species. At present, we do not
consider other events that change the rate of evolution of a gene within
the edge $e$. The framework, however, can easily accommodate such rules in
future refinements of the model. We denote by $\mathfrak{L}_{e,i}$ the
i\textsuperscript{th} ordered pair $(\tau_i,\omega_i)$ in $\mathfrak{L}_e$
and define $\tau(\mathfrak{L}_{e,i}) \coloneqq \tau_i$ and
$\omega(\mathfrak{L}_{e,i}) \coloneqq \omega_i$.

Recall that $0_T\rho_T$ is the first (planted) edge in $T$. To initialize
the simulation, we mark $0_T\rho_T$ as \texttt{conserved} and append
$(\tau(0_T), 1.0)$ to $\mathfrak{L}_{0_T\rho_T}$.
Then for each vertex $u$ in the sorted list we proceed as follows:\\
\textit{(1) $u$ is a speciation event} \\
Mark all edges $uv$ with $v \in \child(u)$ the same as $\parent(u)u$. To
$\mathfrak{L}_{uv}$ we append the pair $(\tau(u), \omega)$ with
$\omega=1.0$ ($uv$ is \texttt{conserved}) or $\omega$ Gamma-distributed
($uv$
is \texttt{divergent}), respectively.\\
\textit{(2) $u$ is a duplication event} \\
If the edge $\parent(u)u$ is marked as \texttt{divergent}, then all edges
$uv$ with $v \in \child(u)$ are also marked as \texttt{divergent} and
corresponding pairs $(\tau(u),\omega)$ are appended to $\mathfrak{L}_{uv}$,
where the values of $\omega$ are drawn i.i.d.\ from
the Gamma distribution. \\
If $\parent(u)u$ is marked as \texttt{conserved}, we choose between (a)
conservation, (b) subfunctionalization and (c) neofunctionalization with
the specified weights.  For (a) mark both incident edges below $u$ as
\texttt{conserved}, for (b) as \texttt{divergent} and for (c) one edge is
\texttt{conserved} and the other is \texttt{divergent}.  To
$\mathfrak{L}_{uv}$ we append the pair $(\tau(u), \omega)$ with
$\omega=1.0$ ($uv$ is \texttt{conserved}) or $\omega$ Gamma-distributed
($uv$ is \texttt{divergent}), respectively.\\
\textit{(3) $u$ is a loss event}\\
If a single copy is left in the respective species after the loss: Let
$e^*$ be the corresponding edge of the remaining copy at $\tau(u)$. Mark
$e^*$ as \texttt{conserved} and append the pair $(\tau(u),1.0)$ to
$\mathfrak{L}_{e^*}$.\\
\textit{(4) $u$ is an HGT event}\\
Let $v_1$ be the copy that remains in the species and $v_2$ the transferred
copy. Mark $uv_1$ the same as $\parent(u)u$ and append $(\tau(u),\omega)$
to $\mathfrak{L}_{uv_1}$ where $\omega$ is the last rate that was appended
to $\mathfrak{L}_{\parent(u)u}$. Mark $uv_2$ as \texttt{divergent} and
append $(\tau(u),\omega)$ to $\mathfrak{L}_{uv_2}$ with $\omega$
Gamma-distributed.

For each edge $e=uv$ in $T$ we finalize $\mathfrak{L}_{e}$ by appending
$(\tau(v),\omega)$ where $\omega$ is the last rate that was appended to
$\mathfrak{L}_{e}$ so far. We then define the edge length $\ell(e)$
for each edge $e$ in $T$ as
\begin{equation}
  \ell(e) = \omega_{f}^0
  \sum_{i=1}^{|\mathfrak{L}_{e}|-1}\omega(\mathfrak{L}_{e,i}) \cdot
  (\tau(\mathfrak{L}_{e,i}) - \tau(\mathfrak{L}_{e,i+1}))
\end{equation}
where $f$ is the edge in the species tree $S$ into which $e$ is embedded.
The resulting distance function $\ell: E(T)\to \mathbb{R}^+$ defines an
additive metric on the set of vertices $V(T)$.  We denote by $\mathbf{D}$ a
distance matrix on the set of non-loss leaves in $T$, i.e., the extant
genes at time $\tau=0$.  

\subsection{Simulation of measurement noise}

In order to simulate measurement noise we consider three strategies:
\par\noindent(1) It is easy to see that a convex combination\\
$(1-\epsilon)\mathbf{D} + \epsilon \mathbf{D'}$, $0\le \epsilon\le 1$ of
two metrics $\mathbf{D}$ and $\mathbf{D'}$ is again a metric (i.e., in
particular, satisfies the triangle inequality). Even if both $\mathbf{D}$
and $\mathbf{D'}$ are additive, however, their convex combination is not
additive in general. This yields a distance that is affected by a
systematic bias corresponding to the noise contribution
$\epsilon\mathbf{D'}$.
\par\noindent(2) Adding i.i.d.\ random noise $\varepsilon_{xy}$ with mean
$1$ and standard deviation $s$ to the additive distance, i.e., substituting
$D'(x,y)=D'(y,x)= \varepsilon_{xy}D(x,y)$, in general violates the triangle
inequality. We therefore use the following simple algorithm: choose
$x\ne y\in L$ at random. If $D'$ satisfies the triangle inequality, then
accept the addition of $\varepsilon_{xy}$; otherwise, reject the change and
choose a new random perturbation. We repeat this until $\binom{|L|}{2}$
perturbations have been accepted. An alternative approach is to first
introduce perturbations to all distances in $\mathbf{D}$ and than to
extract a corrected distance matrix $\mathbf{\hat D}$ using one of several
algorithms for the ``metric repair problem'', see e.g.\
\cite{Brickell:08,Gilbert:17}. A cursory test showed that the trees
reconstructed from distance matrices processed with these methods tend to
be more different from the reference than with our approach of enforcing
the triangle inequality immediately. We therefore did not pursue them
further in the this contribution.
\par\noindent(3) The distance $\ell(e)$ can also be interpreted as the
number of evolutionary events per site. These are then simulated directly
using the generation tool \texttt{pyvolve} \cite{Spielman:15}. We generated
nucleic acid sequences of length $200$ with equal transition and
transversion rates and of length $2000$ with a transition:transversion
ratio of $2:1$.  To assess saturation effects, we scaled the rates $\mu_e$
by $1/4$, $1/2$, $1$, and $2$, respectively. To better connect this work
with protein-based orthology assessment pipelines, we simulated aminoacid
sequences of length $500$ using the WAG model \cite{Whelan:01}. Distances
were then estimated in \texttt{biopython} with the \texttt{BLOSUM62} matrix
\cite{Eddy:04}.

\subsection{Estimation of quartets from sequence data}

In order to estimate quartets directly from aligned sequence data, we
  use the approach of statistical geometry \cite{Eigen:88,Nieselt:97}. We
  start from a multiple alignment of the sequences $x$, $y'$, $y''$, and
  $z$, which we assume to appear in this order. The alignment itself is
  produced by the sequence simulator and thus does not need to be
  recomputed. Each alignment column belongs to one of the 15 categories
determined by which of the four sequence $x$, $y'$, $y''$, and $z$ feature
the same character:
\begin{equation*}
  \setlength{\tabcolsep}{1pt}
  \begin{tabular}{|C|CCCCCCCCCCCCCCC|}
  \hline
  &\C{1} & \C{2} & \C{3} & \C{4} & \C{5} & \C{6} & \C{7} & \C{8} & \C{9} &
   \C{10} & \C{11} & \C{12} & \C{13} & \C{14} & \C{15} \\
  \hline
  x   & a & a & a & a & b & a & a & a & a & a & a & b & b & b & a \\   
  y'  & a & a & a & b & a & a & b & b & a & b & b & a & a & c & b \\
  y'' & a & a & b & a & a & b & a & b & b & a & c & a & c & a & c \\
  z   & a & b & a & a & a & b & b & a & c & c & a & c & a & a & d \\
  \hline
  \end{tabular}
\end{equation*}
The categories $\C{1}$ through $\C{5}$ and $\C{15}$ do not convey
phylogenetic information. Of the remaining ones, $\C{6}$, $\C{9}$, and
$\C{14}$ support $(xy'|y''z)$, $\C{7}$, $\C{10}$, and $\C{13}$ support
$(xy''|y'z)$, and $\C{8}$, $\C{11}$, and $\C{12}$ support $(xz|y'y'')$
\cite{Nieselt:01}.  Denoting by $d_{aaaa}$, etc., the number of alignment
columns belonging to a given category, the support scores for
\emph{geometry mapping} \cite{Nieselt:01} are
\begin{equation}
  \begin{split}
    S(xy'|y''z) &= d_{aabb} + \frac{1}{2}(d_{aabc}+d_{bcaa}) \\
    S(xy''|y'z) &= d_{abab} + \frac{1}{2}(d_{abac}+d_{baca}) \\
    S(xz|y'y'') &= d_{abba} + \frac{1}{2}(d_{abca}+d_{baac}) \\
  \end{split}
\end{equation}
Using $S:=S(xy'|y''z)+S(xy''|y'z)+S(xz|y'y'')$, normalized scores 
are defined a $s(xy'|y''z):=S(xy'|y''z)/S$. 
This unweighted version can be extended to a weighted version when a
non-trivial distance measure $D$ on the underlying alphabet is given.  As
derived in \cite{Nieselt:01}, a support value for the three possible
quartets can be computed separately for each
alignment column $i$ as the isolation index for the distances on the four
characters:
\begin{equation}
  \begin{split}
    2\beta_i(xy'|y''z) &= D^*_i - (D(x_i,y_i') + D(y_i'',z_i)) \\
    2\beta_i(xy''|y'z) &= D^*_i - (D(x_i,y_i'') + D(y_i',z_i)) \\
    2\beta_i(xz|y'y'') &= D^*_i - (D(x_i,z_i) + D(y_i',y_i'')) \\
  \end{split}
\end{equation}
Here $D_i^*$ is the largest of the three distance sums appearing in the
equation above. Summing up the $\beta_i(\,.\,)$ values over all alignment
columns $i$ yields aggregated support scores $\beta(\,.\,)$. These are
conveniently normalized to relative values as in the unweighted case. The
relative support scores for the weighted model reduce to the unweighted
ones if $D(a,b)=1-\delta-{a,b}$ is the trivial metric \cite{Nieselt:01}.
If no quartet can be inferred unambiguously, then we default to the
assumption $\lca(x,y')=\lca(x,y'')$.

\section*{Competing interests}
  The authors declare that they have no competing interests.
  
\section*{Author's contributions}
PFS designed the study, PFS, MG, MH, DS, and MHR derived the mathematical
results, DS implemented the generation of weighted scenarios, AL, MGL, and
DIV performed the sequence-based simulations. All authors contributed to the
interpretation of results and the writing of the manuscript.

\section*{Funding}
This work was support in part by the German Federal Ministry of Education
and Research (BMBF, project no.\ 031A538A, de.NBI-RBC) and the Mexican
Consejo Nacional de Ciencia y Tecnolog{\'i}a (CONACyT, 278966 FONCICYT 2).
Publication costs are covered by the DFG through the Open Access Fund at
Universtit{\"a}t Leipzig.

\begin{appendix}    
\section*{Additional Files}
\subsection*{Additional file 1 --- Alternative construction of $\Gamma$}
In order to further investigate the inaccuracies introduced by unresolved
quartets, we considered alternative constructions of the auxiliary graph
$\Gamma$. In addition to the default method, we omitted all edges defined
for quartets classified as unresolved ($\times$), and we ignored the
contribution of outgroups that lead to unresolved and used a majority vote
only for the remaining choices of the outgroup. All non-trivial sinks
  in $\Gamma$ were then interpreted as best matches, i.e., isolated
  vertices in $\Gamma$ were ignored.  Both variants perform worse than the
default method. Data are compared for short nucleic acid sequences
  with rates of sequence divergence scaled by $1$ (OR), $1/2$ (D2), $1/4$
  (D4), and $2$ (M2).

\texttt{AdditionalFile\_1.pdf}

\end{appendix}

\bibliographystyle{plainnat}
\bibliography{amb-infer}

\end{document}